\def\BibTeX{{\rm B\kern-.05em{\sc i\kern-.025em b}\kern-.08emT\kern-.1667em\lower.7ex\hbox{E}\kern-.125emX}}
\title{Fast and Accurate Homomorphic Softmax Evaluation}
\date{January 28, 2023}
\author{Wonhee Cho}
\affiliation{\department{Department of Mathematics}\institution{Seoul National University}\city{Seoul} \country{South Korea}}
\email{wony404@snu.ac.kr}
\author{Guillaume Hanrot}\authornote{Corresponding author.}
\affiliation{\institution{CryptoLab, Inc.}\city{Lyon}\country{France}}
\email{guillaume.hanrot@cryptolab.co.kr}
\author{Taeseong Kim}
\affiliation{\department{Department of Mathematics} \institution{Seoul National University} \city{Seoul}\country{South Korea}}
\email{kts1023@snu.ac.kr}
\author{Minje Park}
\affiliation{\institution{CryptoLab, Inc.} \city{Seoul}\country{South Korea}}
\email{minje@cryptolab.co.kr}
\author{Damien Stehl\'e}
\affiliation{\institution{CryptoLab, Inc.} \city{Lyon} \country{France}}
\email{damien.stehle@cryptolab.co.kr}
\def\R{{\mathbb{R}}}
\def\Z{{\mathbb{Z}}}
\def\SM{{\mathrm{Softmax}}}
\newtheorem{remark}{remark}
\newtheorem{assumption}{Assumption}
\let\svthefootnote\thefootnote
\newcommand\freefootnote[1]{%
  \let\thefootnote\relax%
  \footnotetext{#1}%
  \let\thefootnote\svthefootnote%
}
\newcommand{\style}[1]{\ensuremath{\mathsf{#1}}}
\newcommand{\pk}{\style{pk}}
\newcommand{\sk}{\style{sk}}
\newcommand{\evk}{\style{evk}}
\newcommand{\rk}{\style{rk}}
\newcommand{\add}{\style{add}}
\newcommand{\mult}{\style{mult}}
\newcommand{\BTS}{\style{BTS}}
\newcommand{\rot}{\style{rot}}
\newcommand{\enc}{\style{enc}}
\newcommand{\dec}{\style{dec}}
\newcommand{\preci}{\style{Prec}}
\newcommand{\C}{\mathbb{C}}
\newcommand{\m}{{\bf m}}
\newcommand{\w}{{\bf w}}
\newcommand{\ct}{\style{ct}}
\begin{document}

\begin{abstract}
Homomorphic encryption is one of the main solutions for building
secure and privacy-preserving solutions for Machine Learning as a
Service, a major challenge in a society where AI becomes more and more
pervasive. This motivates the development of homomorphic algorithms for the
main building blocks of AI, typically for the components of the various
types of neural networks architectures. 

Among those components, we focus on the Softmax function, defined by
$\SM(\mathbf{x}) = \left(\exp(x_i) / \sum_{j=1}^n \exp(x_j)
\right)_{1\le i\le n}$. This function is deemed to be one of the most
difficult to evaluate homomorphically, because of its multivariate
nature and of the very large range of values for $\exp(x_i)$.  The
available homomorphic algorithms remain restricted, especially in
large dimensions, while important applications such as Large Language
Models (LLM) require computing Softmax over large dimensional
vectors. Our algorithm has strong scalability properties in terms of
range and dimension while maintaining very good numerical accuracy. In
terms of multiplicative depth of the computation (a suitable measure
of cost for homomorphic algorithms), our algorithm achieves $O(\log
n)$ complexity for a fixed range of inputs, where $n$ is the Softmax dimension.

Our algorithm is especially adapted to the situation where we must
compute many Softmax at the same time, for instance, in the LLM
situation. In that case, assuming that all Softmax calls are packed
into $m$ ciphtertexts, the asymptotic amortized multiplicative depth
cost per ciphertext is, again over a fixed range, $O(1 + m/N)$
for $N$ the homomorphic ring degree (typically $N=2^{16}$, so that we have
$N \gg m$ in practice).

The main ingredient of our algorithms is a normalize-and-square
strategy, which manages to interlace the (numerically unstable)
exponential computation over a large range and (very expensive)
normalization, decomposing both in stabler and cheaper smaller
steps.

We have implemented our algorithms using the HEaaN implementation of
the CKKS HE system.  Comparing ourselves to the state of the art, our
experiments show, in practice, a gain of a factor~2.5 to 8 compared to
state of the art solutions.

These experiments demonstrate good accuracy (around 16-bit precision
in the worst case, around 20 on average) and support the linear
behavior in the dimension. The many-ciphertexts version allows us to
compute 8192 Softmax of dimension 256 in parallel in~486s
(single-thread CPU), corresponding to an amortized 0.06s per Softmax
call. All Softmax calls of the 32-layers LLaMa large language model (7B
version) with context length 128 on an RTX-6000 GPU take around 1.5
minutes, and the final Softmax call in dimension 32768 for token
generation takes less than 3 seconds. This suggests that
near-practicality may be accessible with dedicated hardware.

\end{abstract}



\keywords{Fully Homomorphic Encryption, CKKS scheme, Softmax, Polynomial Approximation}

\maketitle

\freefootnote{\copyright 2024 Copyright is held by the authors. This is the author's version of the work. It is posted here for your personal use. Not for redistribution. The definitive version was published in CCS'24, \url{https://doi.org/10.1145/3658644.3670369}}

\section{Introduction}
The fast development of AI technologies in everyday life raises more and more pressing privacy and security concerns. In view of these concerns, Privacy-Preserving Machine Learning has emerged as a major research question over the last few years. We refer to \cite{xu2021privacypreserving} for an overview of the current problems, techniques, and solutions. 

In the context of the rapidly expanding "Machine Learning as a Service" (MLaaS) model, Homomorphic Encryption appears to be a promising solution. Initiated by the seminal work of Gentry~\cite{Gen09}, Homomorphic Encryption (HE) is a cryptographic technique that has since then developed at a fast pace. It allows a server to execute an AI algorithm, typically neural network-based, on a user's encrypted data and produce an encrypted answer while having no information neither on the actual data nor on the answer. Only the user can access the result by decrypting.   

On the bright side, homomorphic encryption is very versatile and, from
a theoretical point of view, applies "out of the box" to any AI
algorithm; on the dark side, despite major progress over the last
decade, one of the drawbacks of homomorphic encryption is its high
computational cost, making efficient algorithmic design a key to its
relevance for a given problem.  The deployment of homomorphic
encryption for a secure and private MLaaS model thus requires profound
algorithmic work on the ML primitives in order to design efficient and
adapted solutions. 

It is thus of importance to develop HE-tailored algorithms for the
most common Machine Learning building blocks, a task which has made a
lot of progress over the last few years regarding linear algebra~(see
e.g.~\cite{JKLN, GAZELLE}), convolutional steps~(see \cite{neujeans}
for a recent reference), and univariate activation functions (see
\cite{polapprox} for a recent reference), demonstrating the relevance
of HE as a practical solution for privacy-preserving MLaaS. 

\subsection{Context and Related Work} 
The implementation of neural networks in the homomorphic encryption
context raises several issues. Inference alternates linear steps
(linear algebra, convolutions) and non-linear steps (activation
functions, normalization steps). In this work, we focus on non-linear
steps and, more precisely, on the Softmax step, which probably stands
as the most difficult, being multivariate and involving very large or
very small intermediate results.

Given that AI algorithms rely on real numbers, the most convenient choice among HE schemes is CKKS\cite{CKKS}, which is fully homomorphic (FHE, i.e., allows for arbitrarily complex calculations), and natively enables homomorphic approximate addition, subtraction, and multiplication on real or complex numbers. Additionally, the large inner dimensions of AI primitives makes the fully SIMD mode of operation of CKKS particularly desirable in that setting. 

\subsubsection{Common strategies for activation functions in HE}
Several strategies are possible concerning activation function evaluation. As HE algorithms offer only approximate additions and multiplications (and rotations, to be defined later) as primitives, the starting point is to replace the non-polynomial activation function with a polynomial approximation of it. This requires \emph{a priori} information on the input range of the function, and, depending on the function under study, can be quite a formidable task. For instance, the ReLU function requires~\cite{Bern1913} a degree $O(2^p)$ polynomial to be evaluated to $p$ bits of precision over $[-1, 1]$. Smoother functions are somewhat less difficult to handle but still give rise to large degree polynomial approximations, which require deep circuits, and hence necessitate "bootstrapping" (see Section~\ref{sec:CKKS})-- an operation which remains, despite a lot of progress, the most expensive HE operation. 

Facing this difficulty, we can distinguish two strategies. 

The first one consists in designing and using HE-friendly functions, in order to replace the function itself with a simpler adaptation or even a small degree polynomial; sometimes the small degree polynomial is a decent polynomial approximation of the function under study, sometimes it is simply an unrelated small degree polynomial, like in~\cite{AESPA} where the $x^2$ polynomial is used as a replacement to the ReLU function. One drawback of this approach is that it requires retraining to achieve good accuracy for the neural network (if at all possible with the chosen function). 

The second one consists in developing HE-efficient algorithms for
existing functions, using various strategies in order to try to find
the right compromise between efficiency and accuracy; the
approximation might be generic, or finely tuned to the properties of
the neural network to be evaluated. We refer, for example,
to~\cite{LeeEtAl2023,polapprox}. 

It remains that, contrary to the first approach, this second setting
requires \emph{a priori} knowledge of the range; in between the two
approaches, the authors of \cite{ZimermanEtAl2023} show that training
can be modified to force a sharper range, allowing for a reasonably
accurate implementation by a small-degree polynomial. This, however,
again requires retraining.

\subsubsection{The Softmax function}
In this landscape, homomorphically evaluating the Softmax function,
defined, for $\mathbf{x}\in \R^n$ by
\[
\SM(\mathbf{x}) = \left(\frac{\exp(x_i)}{\sum_{j=1}^n \exp(x_j)}\right)_{1\le
  i\le n}
\] stands as one of the difficult problems to be solved,
because of its multidimensional nature, but also due to the fact that
its computation, even in plaintext, raises numerical issues which are
bound to be even harder in the HE computational model, as the values
of $\exp(x_i)$ can be either very large or very small. 

The HE implementation of Softmax has been subject to the two families
of approaches described above. The first direction was considered
in~\cite{Softmax2}, where a sigmoid function is used instead of the
Softmax function and as an approximation to it in the multi-class
classification problem. In~\cite{Softmax1}, all input values are
divided by a constant in order to handle numerical overflows, and a
Gumble softmax function is obtained. Finally, in~\cite{PrivFT}, a
quadratic polynomial is used as an approximation to Softmax.

In the second direction, the authors of~\cite{BMC} use an iterative
"square-and-normalize" approach akin to ours but using an \emph{a
priori} normalization, which severely limits its range and
accuracy. Finally, the state of the art of this second
approach~\cite{HETAL} follows the standard non-homomorphic strategy
for computing an accurate Softmax, namely first computing and
subtracting the maximum value of the input vector to all coordinates;
HETAL combines this idea with Domain Extension
Polynomials~\cite{CheonetAl2022} in order to achieve a large input
range for the exponential function. However, comparison is a
notoriously difficult task in HE and overall implies a deep circuit;
the same is also true of the final evaluation of the $\exp$ function
over a large range. 

\subsection{Technical contributions}

\subsubsection{Algorithmic strategy and asymptotic complexity}
Let $M$ be a positive real number. For $\mathbf{x}\in[-M, 0]^n$, we use the identity 
\[
\SM(\mathbf{x}/2^{j-1})_i = \frac{\SM(\mathbf{x}/2^{j})_i^2}{\sum_{t=1}^{n} \SM(\mathbf{x}/2^{j})_t^2}, 
\]
which follows from the definition, to build an iterative algorithm.
If~$k$ is an integer such that $k \approx \log M - \log \log n$, we
reduce a difficult (and numerically unstable) inversion over
$[1/n^{2^k}, 1]$ and a hard evaluation of $\exp$ over the large
interval $[-M, 0]$ to an evaluation of $\exp$ over $[-\log n, 0]$
followed by $k$ squares and easier and stable (inverses or) inverse
square-roots over $[1/n, 1]$, using the iteration described in
Algorithm~\ref{alg:main_loop}.

\RestyleAlgo{ruled}
\SetKwComment{Comment}{/* }{ */}
\SetKw{Return}{return}
\begin{algorithm}[ht]\caption{\label{alg:main_loop}Main loop of our Softmax algorithm}
$\lambda_j \gets \left(\sum_{i=1}^n {y^{(j-1)}_i}^2\right)^{-1/2}$ \; 
$z_i \gets \lambda_j\cdot y^{(j-1)}_i$, $i = 1 \ldots n$ \Comment*[r]{[Normalization]}
$y^{(j)}_i \gets z_i^2$, $i = 1 \ldots n$\Comment*[r]{[Squaring]}
\end{algorithm}

When $M < \log n,$ our algorithm becomes the naive strategy of
evaluating the exponential, then normalizing; we shall thus assume $M
\ge \log n$ in the sequel.

For the analysis of our algorithms, we shall use a model that looks simultaneously towards circuit complexity (\emph{multiplicative depth}) and ordinary complexity (\emph{total number of operations}). 
The \emph{multiplicative depth} is primarily important as it is directly related to the total number of bootstrappings, which are by far the most expensive HE operations. We obtain the following.
\begin{theorem}\label{th:intro_1}
Let $M \ge \log n$. On input ${\mathbf x}\in[-M, 0]^n$, Algorithm~\ref{alg:Softmax} returns an approximation of $~\SM({\mathbf x})$ using $(\log n) (\log M) (1 + o(1))$ multiplicative levels and $O((\log M) \sqrt{n\log n})$ multiplications. 
\end{theorem}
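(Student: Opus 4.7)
The plan is to exploit the identity displayed above with $k = \lceil \log(M/\log n)\rceil = O(\log M)$: set $y^{(0)}_i = \exp(x_i/2^k)$, so that $x_i/2^k \in [-\log n, 0]$ and $y^{(0)}_i \in [1/n, 1]$, and iterate Algorithm~\ref{alg:main_loop} $k$ times. A direct induction using that identity shows $y^{(j)}_i = \SM(\mathbf{x}/2^{k-j})_i$ (up to homomorphic approximation errors which I neglect in this cost analysis), so $y^{(k)} = \SM(\mathbf{x})$. Moreover, since $\sum_i \SM(\mathbf{y})_i = 1$, Cauchy--Schwarz gives $1/n \le \sum_i \SM(\mathbf{y})_i^2 \le 1$, so the inverse square root at every iteration acts on an input living in the stable range $[1/n, 1]$.

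For the depth, I would first bound the initial $\exp$ step on $[-\log n, 0]$ by $O(\log\log n)$ multiplicative levels using a low-degree polynomial approximant, which is negligible. Inside one iteration there is one squaring to form $y_i^2$ (depth~1), a sum which consumes no multiplicative level, the inverse square root of this sum (the bottleneck), one multiplication for the normalization (depth~1), and one final squaring (depth~1). For the inverse square root, approximating $1/\sqrt{x}$ on $[1/n, 1]$ to the required target precision by a minimax polynomial yields a degree $D = O(n \log n)$ — the $\log n$ factor paying for the precision budget — evaluated in depth $\log_2 D = \log n + O(\log\log n) = \log n (1+o(1))$ via the Paterson--Stockmeyer algorithm. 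Summing over the $k = O(\log M)$ iterations and adding the initial $\exp$ cost yields the depth bound $(\log n)(\log M)(1+o(1))$.

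For the multiplication count, each iteration uses $O(1)$ homomorphic multiplications for the squarings and normalization, plus $O(\sqrt{D}) = O(\sqrt{n \log n})$ multiplications for the inverse square root via Paterson--Stockmeyer. Multiplying by the $O(\log M)$ iterations gives the bound $O((\log M)\sqrt{n \log n})$ stated in the theorem.

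The main obstacle is precisely the precision analysis that justifies the choice $D = O(n \log n)$: although each iteration essentially squares the input and tends to contract the relative error, the subsequent normalization can amplify small perturbations, and errors accumulate across $\log M$ iterations. A careful error-propagation argument is therefore needed to show that a per-iteration precision of $O(\log n + \log\log M)$ bits suffices to preserve the target output accuracy, which in turn fixes the degree of the minimax approximant for $1/\sqrt{x}$ on $[1/n, 1]$ and hence the depth and multiplication counts quoted above.
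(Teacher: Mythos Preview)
Your proof plan is correct and matches the paper's approach: choose $k \approx \log(M/\ln n)$, spend a negligible depth on $\exp$ over $[-O(\log n),0]$, and then count $k$ iterations each dominated by an inverse square root on an interval of ratio $\Theta(n)$, costing $\log n\,(1+o(1))$ levels and $O(\sqrt{n\log n})$ multiplications via Paterson--Stockmeyer. Two small differences worth noting: the paper obtains the inverse-square-root degree bound constructively (truncated Taylor series to reach constant relative error, then Newton to reach precision~$p$; see Lemma~\ref{le:invsqrt}) rather than asserting a minimax degree, and it decouples the cost analysis from accuracy by simply assuming the internal precision satisfies $p=O(\log n)$ --- so the error-propagation argument you flag as the ``main obstacle'' is handled separately (Theorem~\ref{th:prec_softmax}) and is not needed to establish the level and multiplication counts of Theorem~\ref{th:intro_1}.
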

For fixed $M$, our level usage has a (quasi\footnote{recall that $M
\ge \log n$})linear dependency in $\log n$. We argue that,
heuristically, the actual level usage is $(\log n) (\log M)/2 (1 +
o(1))$. In order to compare ourselves to the state of the art, we
provide, in~Appendix~\ref{app:hetal}, a short analysis of HETAL. This analysis
leads to a heuristic multiplicative depth $(\log n) (\log M) (1 +
o(1))$.

Our precision analysis bounds the loss of accuracy (compared to the
internal precision of CKKS) by $\approx (\log n)(\log M)/2$ bits;
however, we provide heuristic arguments that this bound is very
pessimistic and that the loss of accuracy should be at most
$log M + 3(\log n)/2$ bits in the worst case. Our experiments show an even
better behaviour than this. 

We observe that our algorithm has two threads: a "wide" thread related
to the exponential computation that operates on all inputs in parallel
and a "thin" thread related to the inverse square root computation,
which operates on a single real number at a time.  This observation
allows us to amortize the simultaneous evaluation of Softmax over a
large number of entries using the SIMD capability of CKKS. The
following theorem, which focuses on the cost as a function of 
the number of ciphertexts $m$, summarizes our results in that case.  We
assume that the $m$ input ciphertexts encrypt $mN_0/n$ Softmax calls in
dimension $n$, where $m, n \le N_0$, the number of slots of our HE
system.  For our complexity measure, we divide each level used by the
number of ciphertexts to which it applies; we call the result
\emph{amortized level usage}.
\begin{theorem}
Let $m, n \le N_0$ two integers. On input $\ct_1, \dots, \ct_m$ containing $mN_0/n$ Softmax instances with inputs in $[-M, 0]$ in a packed format, the parallel version of our algorithm computes the corresponding Softmax with amortized level usage by ciphertext $2\log M + o_m(1)$, and amortized number of multiplications by ciphertext $O(\sqrt{\log n} + \log M) + o_m(1)$. 
\end{theorem}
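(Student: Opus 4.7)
The plan is to exploit the two-thread decomposition mentioned just before the statement. The wide thread handles the full $n$-dimensional vectors (the initial polynomial approximation of $\exp$ on $[-\log n, 0]$, the slotwise squarings, and the final multiplication by~$\lambda_j^2$ in each iteration) and therefore operates entirely within the SIMD layout of the $m$ input ciphertexts, with no cross-ciphertext interaction. The thin thread only computes and stores the scalar normalizers $\lambda_j$, producing a single value per Softmax call; the $mN_0/n$ scalars produced at iteration~$j$ can therefore be repacked into $\lceil m/n \rceil$ dense thin-thread ciphertexts by rotations and slot-masking, at no multiplicative-depth cost.

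I would then analyze the two threads separately. On the wide side, each iteration of Algorithm~\ref{alg:main_loop} can be arranged to consume exactly two multiplicative levels per input ciphertext: one for $u_i = (y_i^{(j-1)})^2$, which both feeds the intra-ciphertext summation (realized as a depth-free rotation tree) sent to the thin thread and is cached locally; and one for $y_i^{(j)} = \lambda_j^2 \cdot u_i$, which fuses the normalization and the squaring. The initial Paterson-Stockmeyer evaluation of a degree-$O(\log n)$ polynomial approximation of $\exp$ on $[-\log n,0]$ has depth $O(\log \log n)$ and uses $O(\sqrt{\log n})$ multiplications. Summed over the $k = \Theta(\log M)$ iterations, the wide thread therefore costs $2\log M + o(\log M)$ levels and $O(\log M + \sqrt{\log n})$ multiplications per input ciphertext.

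For the thin thread, each packed ciphertext carries up to $N_0$ independent inverse-square-root computations performed in SIMD, so its depth and its multiplication count match the single-Softmax analysis underlying Theorem~\ref{th:intro_1}, namely $O((\log n)(\log M))$ levels and $O(\sqrt{n \log n}\, \log M)$ multiplications. Multiplying by the $\lceil m/n \rceil$ such ciphertexts and dividing by the $m$ input ciphertexts in the amortization denominator yields a per-input contribution of order $(\log M)(\log n)/\max(n,m)$ in levels, and a correspondingly small multiplication overhead of order $(\log M)\sqrt{\log n / n}$, both absorbed into the $o_m(1)$ terms as $m$ (or $n$) grows. Summing the wide and thin contributions then gives the stated bounds. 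The main subtlety I expect to be the chief obstacle is the packing/unpacking bookkeeping: verifying that collecting the thin-thread scalars into $\lceil m/n \rceil$ ciphertexts and broadcasting $\lambda_j^2$ back to the $m$ wide-thread ciphertexts can indeed be carried out at zero multiplicative depth, with a number of rotations absorbed into the stated multiplication counts, so that only the wide-thread $2\log M$ remains in the leading term.
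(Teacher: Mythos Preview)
Your proposal is essentially the paper's own argument: split into the wide (main) thread and the thin (auxiliary) thread, invoke the single-instance costs of Theorems~\ref{th:main_thread} and~\ref{th:aux_thread} for each, and note that the thin thread fits in $O(1)$ ciphertexts so that its entire cost, divided by~$m$, becomes~$o_m(1)$. The paper states this as Theorem~\ref{th:parallel_levels} and does not give a separate formal proof beyond this decomposition.

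Two small corrections. First, your per-iteration bookkeeping ($u_i=(y_i^{(j-1)})^2$ then $y_i^{(j)}=\lambda_j^2 u_i$) is the square-and-normalize order, whereas Algorithm~\ref{alg:Softmax} uses normalize-and-square; the paper explicitly remarks that both variants exist and both cost two wide-thread levels per iteration, so this is harmless. Second, your amortization arithmetic is off: $\lceil m/n\rceil/m$ is of order $1/\min(m,n)$, not $1/\max(m,n)$. In the regime the paper actually works in (at most $N_0$ Softmax calls, hence $m\le n$ and a \emph{single} thin ciphertext), the amortized thin contribution is $C/m$ with $C=O((\log n)(\log M))$, which is indeed $o_m(1)$ since $n\le N_0$ keeps $\log n$ bounded. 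Finally, the pack/broadcast step is not quite depth-free: the masking in Step~7 of Algorithm~\ref{alg:AuxThread} costs one level, but it is charged to the auxiliary thread and therefore also amortizes away.
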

We also describe a variant (see Algorithm~\ref{alg:Softmax1B} in Section~\ref{sec:parallel}) which reduces the amortized level usage by a factor of 2, from $\approx 2\log M$ to $\approx \log M$, up to an increase of the $\log M$ term to $(\log M)^2$ in the number of multiplications. 

Compared to this, a many-ciphertext version of HETAL has an amortized
level usage by ciphertext of the order of $\approx 4 \log M$, which is
4 times larger than our best solution. 

\subsubsection{Experimental results}
We conclude by an extensive experimental study demonstrating the
asymptotic behaviour expected in view of these theoretical analyses
for real-world values of $n, m, M$, with good accuracy ($\approx 16$
bits).

In particular, in \emph{single thread CPU}, the latency of our
algorithms in dimension~{128} to 1024 is around 250 seconds,
corresponding to a throughput of the order of 1 Softmax computation
per second. If 8192 Softmax of dimension 256 are to be computed
simultaneously, we obtain a latency of 414 seconds for a throughput of
$\approx 20$ Softmax per second. 

Finally, we demonstrate the solid scalability properties of our
algorithm by the computation of a Softmax of dimension 32768 over
[-256, 0]. The computation takes 254 sec, again on a single-thread
CPU, with decent accuracy.

Compared to the state-of-the-art solution HETAL~\cite{HETAL}, we
obtain a speedup ranging from 2.5 to 8, the best speedups being obtained
in the case where many Softmax need to be computed in parallel. 

\section{PRELIMINARIES}\label{sec:preliminaries}

\noindent{\bf Notation.}
We let vectors be denoted in lower-case bold face. For a real number $r$, we let $\lceil r \rceil$ denote the smallest integer that is no smaller than $r$.
The notation $\log$ stands for base-2 logarithm, whereas $\ln$ refers to the logarithm in base $e \approx 2.71...$. 
We let the Hadamard multiplication, namely entry-wise multiplication of vectors, be denoted by $\odot$: formally, given two $n$-dimensional vectors ${\bf v}= (v_1,v_2,\cdots,v_n)$ and $\w = (w_1,w_2,\cdots,w_n)$, ${\bf v} \odot \w$ denotes $(v_1\cdot w_1, v_2\cdot w_2, \cdots, v_n\cdot w_n)$. 

In our analyses, we shall use the $o(\cdot)$ and $O(\cdot)$ Landau
notations.

\subsection{A quick overview of Homomorphic Encryption}
Homomorphic Encryption (HE) is a cryptographic primitive which allows
to compute on encrypted data. More precisely, compared to the
classical setting of public-key encryption, HE provides one further
key, the \emph{evaluation key}, and one further operation, the
\emph{evaluation function} \textsf{Eval}, which with the help of the
evaluation key is able to evaluate functions from encrypted input
and obtain encrypted output; note that the evaluation key does not
allow decryption.

In abstract definitions, the evaluation function is assumed to be able
to evaluate any circuit.  In the descriptions and implementations of
concrete schemes, it is most often replaced by a set of specific
functions for addition, multiplication, etc.

The HE landscape as of today appears mostly focused around three
systems: CGGI/TFHE \cite{CGGI}, performing exact computation on
booleans or very small integers; BFV/BGV \cite{Brakerski12, FV12,
  BGV14}, performing exact computations on integers or elements of
small finite fields; and CKKS \cite{CKKS}, performing approximate
computations on real or complex numbers. We shall use the latter
one in the sequel. 

\subsection{The CKKS scheme}\label{sec:CKKS}
The CKKS \cite{CKKS} homomorphic encryption algorithm is a HE cryptosystem that allows one to encode and work on real or complex numbers as messages and to perform approximate arithmetic on the encrypted messages. Given a ring-degree $N$, CKKS plaintext messages are elements of $\C^{N/2}$, and the ring-homomorphism property of CKKS is related to the ring structure of $(\C^{N/2}, +, \odot)$. The coordinates of a given message are called \emph{slots}.

In the sequel, we shall denote by $N_0 = N/2$ the number of slots of
our HE system. The notation $n$ and the word ``dimension'' will be
refer to the dimension of the input vector of the Softmax function,
see~Eq.~(\ref{eq:smdef}).

\def\Dec{{\mathsf{Dec}}}
\def\Rot{{\mathsf{Rot}}}
\def\Add{{\mathsf{Add}}}
\def\Mult{{\mathsf{Mult}}}
\def\Enc{{\mathsf{Enc}}}

\subsubsection{Functionalities}
CKKS provides the following functionalities: 
\begin{itemize}
\item {\bf Key generation}: Given a security parameter $\lambda$ and a subset $S \subset \{1, \cdots, N_0-1 \}$, returns a public key $\pk$, a secret key~$\sk$, an evaluation key $\evk$, including a set of rotation keys $\{\rk_i\}_{i \in S}$.
\item {\bf Encryption}: Given a public key $\pk$ and a message $\m \in \C^{N_0}$, outputs a ciphertext $\ct = \Enc(\pk, \m)$ encrypting  $\m$. 
\item {\bf Decryption}: Given a secret key $\sk$ and a ciphertext $\ct$ encrypting $\m$, outputs $\Dec(\sk, \ct) \approx m$.
\item {\bf Addition}: Given two ciphertexts $\ct_1$ and $\ct_2$ of $\m_1$ and~$\m_2$, outputs a ciphertext $\ct_{\add} = \Add(\ct_1, \ct_2)$ such that we have $\Dec(\sk, \ct_{\add}) \approx \Dec(\sk, \ct_1) + \Dec(\sk, \ct_2)$.
\item {\bf Multiplication}: Given an evaluation key $\evk$ and two ciphertexts $\ct_1$ and $\ct_2$, outputs $\ct_{\mult} = \Mult(\evk, \ct_1, \ct_2)$ such that $\Dec(\sk, \ct_{\mult}) \approx \Dec(\sk, \ct_1) \odot \Dec(\sk, \ct_2)$. 
\item {\bf Rotation}: Given a rotation key $\rk_i$ and a ciphertext $\ct$ such that $\Dec(\sk, \ct) = (m_0,\cdots, m_{N_0-1})$, outputs a ciphertext $\Rot(\ct, i)$ with the property that $\Dec(\sk, \Rot(\ct, i)) = (m_{i+1},\cdots, m_{N_0-1},m_0,\cdots,m_i)$. The rotation index $i$ is defined modulo $N_0$, namely $\Rot(\cdot , -i)$ means $\Rot(\cdot, N_0-i)$.
\end{itemize}

\subsubsection{Levels}
CKKS comes with a notion of \emph{multiplicative level}. Heuristically
speaking, the "quality" of a ciphertext degrades when it goes through
multiplications. To formalize this, attached to each ciphertext is a
nonnegative integer, its \emph{level}, representing the multiplicative
budget left for this ciphertext: if we multiply $\ct_1$ with level~$l_1$
and $\ct_2$ with level $l_2$, we obtain as a result $\ct$ with
level $\min(l_1, l_2) - 1$. Once a ciphertext attains a certain bottom
level, it must be refreshed by a procedure named \emph{bootstrapping}
(\BTS) \cite{BTS} before undergoing further computations. It then
recovers a full multiplicative budget.

In practice, one can describe HE algorithms either in a \emph{levelled
HE} model, where one considers\footnote{Though feasible in theory, for
large computations, this leads to computing with huge HE parameters,
which is generally inefficient.} that the underlying HE parameters
allow for a computation of sufficiently large multiplicative
depth. However, for large computations a more realistic analysis is
obtained by considering the \emph{Fully HE} model, where bootstrapping
has to be used at suitable places to restore the level budget. The
placement of bootstrapping steps is a question by itself; it is
strategic as the bootstrapping steps are, most of the time, by far the
most costly steps of the whole computation.

\subsubsection{Numerical aspects}
The ability of CKKS to natively manipulate complex vectors as messages makes it especially adapted for AI applications. It also means that CKKS  only provides one with approximate operations. In practice, the approximate identities above can be rewritten as upper bound estimates on 
\begin{align*}    
\| \Dec(\sk, \ct_{\add}) - \Dec(\sk, \ct_1) - \Dec(\sk, \ct_2) \|_\infty,  \\
\| \Dec(\sk, \ct_{\mult}) - \Dec(\sk, \ct_1) \odot \Dec(\sk, \ct_2) \|_\infty, 
\textrm{ etc.},
\end{align*}
where $\|(v_i)_{1\le i\le n}\|_\infty  := \max_{1\le i\le n} |v_i|$ stands for the largest modulus of a coordinate of the vector $v$. For the corresponding estimates, we refer the reader, e.g., to~\cite{CKKS}.

\def\op{\mathrm{op}} In order to reason on CKKS arithmetic and
numerical issues, we need to model the behaviour of the underlying
arithmetic.  Intuitively, CKKS encodes a real number $r$ as $\lfloor
r\Delta\rceil, $ where $\Delta$ is a large integer, called the scaling
factor. It makes the behaviour of CKKS arithmetic very close to that
of a {\em fixed-point system}, implying that from a numerical point of
view we can model addition as error-free and other operations
introduce an absolute error $\varepsilon$.  We shall adopt this error
model in the present paper.

\subsubsection{Functions as polynomials}\label{ssec:polapprox}
The fact that the only arithmetical functions available in an HE
context are addition and multiplication implies that the set of
functions that one can evaluate is, in essence, restricted to
polynomials. A consequence of that fact is that the evaluation of more
general functions (we shall require exponential and inverse square
root) is performed through the use of \emph{polynomial
approximations}. Given a function $f$ over a fixed input interval $I$
and a degree, one finds, using various techniques (Chebyshev
interpolation, $L^2$ approximation, minimax approximation) a
polynomial $P$ such that for all $x\in I$, $|P(x) - f(x)| <
\varepsilon$.  This reduces the evalution of a function $f$ to the
evaluation of the polynomial $P$.

We shall often use the following facts:
\begin{itemize}
\item The evaluation of a polynomial $P$ of degree $d$ can be performed
  using $\lceil \log (d+1) \rceil$ multiplicative levels;
\item The evaluation of a polynomial $P$ of degree $d$ can be performed
  using $O(\sqrt{d})$ multiplications between ciphertexts (see~\cite{PS73}).
\end{itemize}

The cost of evaluating a function is thus directly related to the
degree of the polynomial approximation, which itself depends on the
range, the ``regularity'' of the function and the quality requirements
for the approximation, namely the value of $\varepsilon$.

\subsubsection{Computational model}
All these remarks allow us to define a computational model that we shall use to describe and analyze our algorithm: 
\begin{itemize}
    \item Our computation model implements addition, Hadamard multiplication, and rotations on vectors of $N_0$ slots; 
    \item Numerically speaking, all computations are performed in fixed point arithmetic with $p$ bits of precision following the binary point.
    \item Our model provides us, via polynomial approximation, with approximations of the functions $\exp$ and $x^{-1/{2^k}}$ for any fixed~$k$.
\end{itemize}
In Appendix~\ref{app:aux_functions}, we discuss how to build, in theory, those functions from our basic building blocks,
with complexity estimates. The practical implementation of those functions is somewhat different from this theoretical
study and is discussed in Section~\ref{sec:implementation}.

\subsection{Complexity model}\label{ssec:complexitymodel}
Since our algorithm targets scalability to situations where many Softmax have to be computed at once in large
dimension, we have chosen to provide simultaneously asymptotic estimates (which demonstrate scalability) and
implementation results, in order to validate our choice of complexity measures and the asymptotic estimates.
For this analysis, we need to define a suitable measure of cost for an HE algorithm. 

In short, the folklore wisdom regarding HE is the following: 
\begin{itemize}
  \item Bootstrapping is, by far, the most time-consuming of all operations, even if a lot of work
\cite{BTS, BTS2, BTS3, BTS4, BTS5}, both on the algorithmic and on the implementation sides, has been and is
still undertaken to improve its efficiency.
\item Apart from this, ciphertext-ciphertext multiplications and rotations are roughly on par and
  are the most expensive elementary operations.
\end{itemize}

Orders of magnitude of the cost of those operations for typical CKKS parameters can be found in Section~\ref{sec:implementation}. 

In view of this, we use level consumption, which is directly linked to
the number of bootstrapping calls, as a primary measure of
complexity. This is an imperfect model, but in our experience, it
gives a fair measure of the efficiency of HE algorithms requiring many
levels. We also provide counts for the number of multiplies and
rotations, which, combined with the previous, gives a good global
picture of the efficiency of our HE algorithms.

\section{The Baseline Softmax Algorithm}\label{sec:algorithms}
In this section, we present and analyze our algorithm for Softmax. 
For commodity we shall give a non-HE description of our algorithms, not addressing specific HE aspects. 

\subsection{Definitions, notations, elementary properties}
We start by reviewing the definition and useful properties of the $\SM$ function. 
\begin{definition}
Let $n$ be an integer; for $\mathbf{x} = (x_i)_{1\le i\le n} \in \R^n$, we define 
\begin{align}
\SM(\mathbf{x}) & = \lambda^{-1} \left(\exp(x_1), \dots , \exp(x_n)\right) \in \R^n, \label{eq:smdef}
\end{align}
 \textrm{where } $\lambda = \sum_{i=1}^n \exp(x_i)$.
\end{definition}

We mention two elementary properties of $\SM$ which are helpful in the discussions to come.
\begin{lemma}
For all  $\mathbf{x} = (x_i)_{1\le i\le n} \in \R^n$, the following identities hold:
\begin{enumerate}[a.]
\item We have, for all $\mu \in \R$,  
  \begin{equation}
  \SM((x_i - \mu)_{1\le i\le n}) = \SM(\mathbf{x}); \label{eq:normalized_sm}
  \end{equation}.
\item\label{ref:approx_normalization} If $\mathbf{y} = \alpha\cdot \SM(\mathbf{x})$, $\alpha \ne 0$, then \begin{equation}\label{eq:approx_renorm}
\SM(\mathbf{x})= \mathbf{y} / \sum_{i=1}^n y_i;
\end{equation}
\end{enumerate}
\end{lemma}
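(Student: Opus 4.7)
The plan is to prove both identities by direct computation from the definition given in Eq.~\eqref{eq:smdef}. Neither requires any non-elementary fact; the only two ingredients needed are the multiplicativity of the exponential, $\exp(a+b) = \exp(a)\exp(b)$, and the basic observation that the coordinates of $\SM(\mathbf{x})$ sum to $1$.

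For part (a), I would simply expand $\SM((x_i - \mu)_{1\le i\le n})_i$ according to the definition, obtaining a ratio whose numerator is $\exp(x_i - \mu) = \exp(-\mu)\exp(x_i)$ and whose denominator is $\sum_{j=1}^n \exp(x_j - \mu) = \exp(-\mu)\sum_{j=1}^n \exp(x_j)$. The factor $\exp(-\mu)$ cancels from numerator and denominator, yielding $\exp(x_i)/\sum_{j=1}^n\exp(x_j) = \SM(\mathbf{x})_i$, as required.

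For part (b), I would first note that by the definition,
\[
\sum_{i=1}^n \SM(\mathbf{x})_i = \frac{1}{\lambda}\sum_{i=1}^n \exp(x_i) = \frac{\lambda}{\lambda} = 1,
\]
where $\lambda = \sum_{j=1}^n \exp(x_j)$. Consequently, if $\mathbf{y} = \alpha \cdot \SM(\mathbf{x})$ for some $\alpha\ne 0$, then $\sum_{i=1}^n y_i = \alpha \cdot \sum_{i=1}^n \SM(\mathbf{x})_i = \alpha$, and dividing gives $\mathbf{y}/\sum_{i=1}^n y_i = (\alpha \cdot \SM(\mathbf{x}))/\alpha = \SM(\mathbf{x})$.

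There is no real obstacle here: both identities are essentially restatements of the definition, and the proof is a two-line calculation for each. The reason to state these as a lemma is that identity (a) will later justify reducing to the regime $\mathbf{x}\in[-M, 0]^n$ (subtracting the maximum), while identity (b) is the workhorse of the normalize-and-square iteration of Algorithm~\ref{alg:main_loop}, since it shows that any positive scalar multiple of $\SM(\mathbf{x})$ can be renormalized back into $\SM(\mathbf{x})$ by dividing by its coordinate sum.
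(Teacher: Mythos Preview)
Your proof is correct and is exactly the elementary direct computation the paper has in mind; in fact the paper does not even write out a proof for this lemma, presenting both identities as immediate consequences of the definition in Eq.~\eqref{eq:smdef}. There is nothing to add.
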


The intuition behind Item~b. is that if we have an algorithm that computes $\SM$ "up to a multiplicative constant," we can recover the actual $\SM$ thanks to a final normalization step. In other words, we may, to some extent, let errors accumulate as long as they accumulate multiplicatively and in the same way for all coordinates.
\subsection{Challenges}
We now discuss the numerical challenges associated with computing Softmax, especially in a fixed-point setting. These difficulties will shape our strategy. 
\subsubsection{Numerical issues}
A straightforward implementation of Equation~(\ref{eq:smdef}) raises
difficult issues, even if all inputs are known to lie in $[-M/2, M/2]^n$ for a fixed constant $M$~: 
\begin{enumerate}[(a)]
 \item Overflow, if $\exp(M/2)$ is too large to fit within the fixed-point specification; 
 \item Symmetrically, underflow if $\exp(-M/2)$ is too small to fit within the fixed-point specification; contrary to overflow, underflow is only an issue in the present context if the largest coordinates of the vector underflow; 
 \item In fixed point, any quantity which decreases to $\eta < 1$ loses $-\log \eta$ bits of relative precision, which cannot be recovered if the quantity grows back to its value. Securing precision in such a context requires taking steps to keep data "not too far away from 1" at all times.
 \item In the HE context, the normalization (division by $\lambda$) is notoriously difficult, unless one has good control over the interval where $\lambda$ lives; if we implement Eq.~\ref{eq:smdef} ``as is'', then we only know that $\lambda \in [n\exp(-M/2), n\exp(M/2)]$ which is typically a huge interval. 
\end{enumerate}

In the non-homomorphic setting, one avoids issues (a), (b) 
 by using Eq.~(\ref{eq:normalized_sm}) with $\mu = \max_{1\le i\le n} x_i$, which
ensures that the maximum component of the Softmax vector is equal to
$1$; it then does not overflow, and the other components can then
safely underflow. This also helps significantly with the more
HE-specific (d). Item (c) is taken care of automatically by
floating-point arithmetic.

HETAL~\cite{HETAL} chooses to use the same strategy for (a), (b);
however, comparisons are notoriously difficult in HE, and the estimation
of $M$ is the most-time consuming step, by far, of their Softmax algorithm.
It is then followed by the evaluation of the exponential function over
a large interval of width $M$, which requires domain extension functions
\cite{CheonetAl2022}. 

We shall use a (very) poor man's version of this idea, by subtracting a trivial upper bound for $\max_{1\le i\le n} x_i$: 
\begin{assumption}
  We assume that all our inputs lie within the interval $[-M, 0]$.
\end{assumption}
This assumption shall be made in the rest of this paper; we reduce to
this situation by using~Equation~(\ref{eq:normalized_sm}).  This
allows us to avoid issue~(a), but not the other ones. We now turn to
explain our strategy and how it allows to avoid (b), (c) and limit the
impact of~(d).

\subsubsection{Our strategy}
Rather than computing the full vector of exponential values
$(\exp(x_i))_{1\le i\le n}$ this way and then divide by the term $\sum_{i=1}^n
\exp(x_i)$, we interlace the two computations, using the classical
doubling identity $\exp(2x) = \exp(x)^2$ in the following way~: if
$\mathbf{y} = \SM(\mathbf{x})$ or $\mathbf{y} =
(\exp(x_i))_{1\le i\le n}$, then, for all $i$, 
 \begin{equation}\label{eq:doubling_sm}
 \SM(2\mathbf{x})_i = \frac{y_i^2}{\sum_{j=1}^{n} y_j^2}.
 \end{equation}

 Given a suitable parameter $k$, we shall compute 
 \[
 \mathbf{y}^{(0)} = \left(\exp(x_i/2^k)\right)_{1\le i\le n}
 \]
 directly, then use $k$ times this doubling identity in order to obtain $\mathbf{y}^{(k)} = \SM({\mathbf x})$. 

In this way, we trade one inversion for $k$ inversion steps;
however, we trade one \emph{hard} inversion for $k$ \emph{much easier}
ones. Indeed, the first number to invert is in $[n\exp(-M/2^k), n]$. For 
$k \approx \log M$, this is a rather small interval, which
makes the inversion tractable. The following lemma considers the 
  following loop iterations.
\begin{lemma}\label{le:cs}
  Let $\mathbf{x}\in \R^n$ and $\mathbf{y}
  =\SM(\mathbf{x}/2^j)$. Then, we have $\sum_{i=1}^n y_i^2 \in [1/n, 1]$.
  \end{lemma}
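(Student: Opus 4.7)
The key observation is that the argument does not actually depend on $j$ or on $\mathbf{x}$: the vector $\mathbf{y} = \SM(\mathbf{x}/2^j)$ is, by the very definition of Softmax in Equation~(\ref{eq:smdef}), a probability vector, i.e.\ its entries satisfy $y_i \ge 0$ and $\sum_{i=1}^n y_i = 1$. My plan is to derive both bounds purely from these two facts.

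For the upper bound, I would use that each $y_i \in [0,1]$, so $y_i^2 \le y_i$ for every $i$; summing over $i$ yields $\sum_{i=1}^n y_i^2 \le \sum_{i=1}^n y_i = 1$. For the lower bound, I would invoke the Cauchy--Schwarz inequality (or equivalently the QM--AM inequality) in the form
\[
1 \;=\; \Bigl(\sum_{i=1}^n y_i\Bigr)^{\!2} \;=\; \Bigl(\sum_{i=1}^n 1\cdot y_i\Bigr)^{\!2} \;\le\; n \sum_{i=1}^n y_i^2,
\]
which rearranges to $\sum_{i=1}^n y_i^2 \ge 1/n$. Combining the two inequalities gives the claim.

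There is essentially no obstacle here; the lemma is a two-line consequence of the fact that $\SM$ outputs a discrete probability distribution, and the scaling $\mathbf{x}/2^j$ plays no role in the argument. The only thing worth a brief remark in the write-up is that both endpoints are tight: the upper bound is approached when one coordinate of $\mathbf{x}/2^j$ is much larger than the others (so $\mathbf{y}$ concentrates on one index), and the lower bound is attained when all coordinates of $\mathbf{x}/2^j$ are equal (so $y_i = 1/n$). This matches the intuition the authors need later, namely that the quantity to be inverse-square-rooted in Algorithm~\ref{alg:main_loop} always lives in the comfortable range $[1/n, 1]$.
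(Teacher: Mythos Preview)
Your proof is correct and follows essentially the same approach as the paper's: the paper's one-line proof simply invokes $y_i \ge 0$, $\sum_{i=1}^n y_i = 1$, and the Cauchy--Schwarz inequality, which is exactly what you have spelled out in detail.
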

\begin{proof}
  This follows from $y_i \ge 0$,
  $\sum_{i=1}^n y_i = 1$ and Cauchy-Schwarz inequality.
\end{proof}

The factor $\lambda^2 := 1/(\sum_{i=1}^n y_i^2)$ is a
\emph{normalization factor}. Equation~(\ref{eq:doubling_sm}) then
leads to a "\emph{square-and-normalize}" strategy; for better numerical accuracy, 
we shall rather adopt a
"\emph{normalize-and-square}" strategy, which first computes $z_i =
\lambda y_i$, then squares.  This is
slightly less efficient in computational terms but guarantees that
\[
\max z_i\in [n^{-1/2}, 1],
\]
at all times. Only then do we compute $z_i^2$.
This allows us to limit the impact of difficulty~(c) --
a direct implementation of Equation~(\ref{eq:doubling_sm}) would
lead to numbers as small as $1/n^2$ after squaring and before
normalization, leading, in fixed point arithmetic, to a potential
loss of $2\log n$ bits of accuracy instead of $\log n$.

\begin{remark}\label{rem:approx_renorm_practice}
Thanks to Equation~(\ref{eq:approx_renorm}), our strategy has a
(partial) self-correcting ability: the multiplicative numerical
error in the computation of the normalization factors will be
corrected during the next normalization step. As the goal of normalization
is mostly to ``keep quantities close to 1'', this allows for a much
cheaper approximate normalization. We shall come back to this in Section~\ref{sec:implementation}.
\end{remark}

\begin{remark}
Depending on the internal precision and the target precision, the
square-and-normalize strategy can be preferred, as in practice, it may
lead to a slightly easier computation of the normalization factor
(inverse rather than inverse square root).  Other normalizations using
$k$-th powers with $k\ge 3$ can be considered; they are less efficient
asymptotically but can be of practical interest for small values of
$n$. We discuss this issue in Appendix~\ref{se:diffnorm}.
\end{remark}

\subsection{High-level description of the algorithm and precision analysis}
We now turn to a formal description of the algorithm. We shall first
describe our algorithm in an "ordinary", fixed-point setting, and
focus on the precision analysis. We will then explain how to implement
this ``ordinary'' algorithm homomorphically. 

\RestyleAlgo{ruled}
\SetKwComment{Comment}{/* }{ */}
\SetKw{Return}{return}
\begin{algorithm}[ht]
\caption{$\SM$}\label{alg:Softmax}
\KwData{$\mathbf{x} \in [-M, 0]^n, k \in \Z_{> 0}$}
\KwResult{$\mathbf{y} = \SM(\mathbf{x}).$}
$y^{(0)}_i \gets \exp(x_i/2^k), i = 1 .. n$ \; 
\For{$j$ \textrm{from} $1$ \textrm{to} $k$}{ 
$\lambda_j \gets \left(\sum_{i=1}^n {y^{(j-1)}_i}^2\right)^{-1/2}$ \; 
$z_i \gets \lambda_j\cdot y^{(j-1)}_i$, $i = 1 .. n$ \Comment*[r]{[Normalization]}
$y^{(j)}_i \gets z_i^2$, $i = 1..n$\Comment*[r]{[Squaring]}
}
\Return($y^{(k)}$)
\end{algorithm}

\begin{figure*}
  \includegraphics[width=\textwidth]{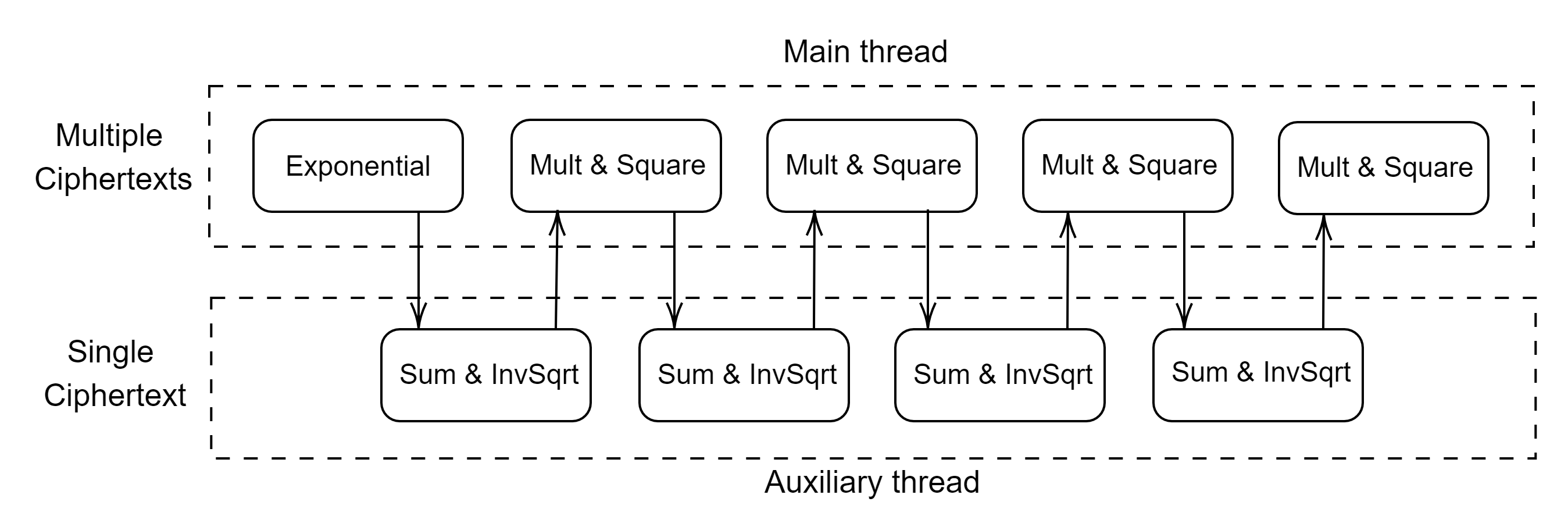}
  \caption{High-level Overview of the Softmax Algorithm. \label{fig:threads}}
  \Description{This image illustrates the two threads of our Softmax algorithm. The top thread computes exponential \& normalizes and squares, whereas the bottom thread computes normalization factors.}
  \label{fig:Meta-BTS}
\end{figure*}

We now give formal statements of (approximate) correctness. Recall
that we have fixed the internal precision of our computational model
to be $p$-bit fixed point.
\begin{assumption}
Addition incurs no error in our computational model. We assume that we
have native access to an exponential function
and an inverse square root function that return a value within $2^{-p}$
of the actual mathematical value. 
\end{assumption}
The first part of this assumption is standard in fixed-point models.
The second part is discussed in Appendix~\ref{app:aux_functions}. We
shall write $\varepsilon = 2^{-p}$.

Note that with exact computations, the correctness of both algorithms
follows from the doubling formulas given above. Controlling error
amplification leads to a somewhat technical discussion. The reader
mostly interested in the practical behaviour of the algorithms should
feel free to 
jump to the
implementation section which demonstrates the practical accuracy of
Algorithm~\ref{alg:Softmax} in typical cases.

\begin{theorem}\label{th:prec_softmax}
  Let $M$ be a fixed positive real number.
  We define $k =  \lceil \log M - \log \ln n\rceil$ and
  $\varepsilon = 2^{-p}$ and assume that
  \[\max(n^2, 2.9 (2.08 \sqrt{n})^k) \varepsilon \le 1/1000.\] 
Algorithm~\ref{alg:Softmax}, on input $\mathbf{x}\in[-M, 0]^n$, returns a vector $y\in [o(1),1+o(1)]^n$ such that
  \[
     \max_{1\le i\le n} |y_i - \SM(x)_i| \le 2.9 \cdot \left((2.08 \sqrt{n})^k (n+1) + 15.5n^2\right) \varepsilon.
  \]
\end{theorem}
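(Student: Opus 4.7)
The plan is to exploit the doubling identity~(\ref{eq:doubling_sm}), which ensures that with perfect arithmetic $y^{(j)} = \SM(\mathbf{x}/2^{k-j})$ for every $j \ge 1$ (the initial vector $y^{(0)} = (\exp(x_i/2^k))_i$ being a common nonnegative scalar multiple of $\SM(\mathbf{x}/2^k)$, whose scalar is absorbed by the first normalization). Proving the theorem then amounts to carefully tracking how the $\varepsilon$-errors on each exponential, inverse square root, multiplication, and squaring operation accumulate over the $k$ iterations, with the goal of establishing a per-step amplification factor of roughly $2.08\sqrt n$.

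First, I would handle the base case. The choice $k = \lceil \log M - \log \ln n\rceil$ guarantees $M/2^k \le \ln n$, so every exact value $\exp(x_i/2^k)$ lies in $[1/n, 1]$ and each $y^{(0)}_i$ is correct up to an absolute error of $\varepsilon$. Next, I would prove a one-step propagation lemma: if $\tilde y^{(j-1)} = y^{(j-1)} + e^{(j-1)}$ with $\|e^{(j-1)}\|_\infty \le E_{j-1}$ and $y^{(j-1)} = \SM(\mathbf{x}/2^{k-j+1})$, then the computed $\tilde y^{(j)}$ satisfies $\|e^{(j)}\|_\infty \le 2.08\sqrt n\, E_{j-1} + c\varepsilon$ for an explicit constant $c$. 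The argument has three ingredients: (i) a bound on $|\tilde S - S|$, where $S = \sum_i (y^{(j-1)}_i)^2$, exploiting $\|y^{(j-1)}\|_1 = 1$ and Lemma~\ref{le:cs} ($S \ge 1/n$); (ii) propagation through the inverse square root, using the sensitivity $dS^{-1/2}/dS = -\frac{1}{2}S^{-3/2}$; and (iii) a combined analysis of the normalization and squaring, where the $\SM$-induced cancellation between $(\tilde\lambda_j - \lambda_j)y^{(j-1)}_i$ and $\lambda_j e^{(j-1)}_i$ is what brings the naive $n^{3/2}$ amplifier down to~$\sqrt n$.

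Iterating for $k$ steps and adding the base-case contribution $E_0 \le (n+1)\varepsilon$ (accounting for the $n$ exponential errors propagated through the first sum of squares) produces the announced bound $E_k \le 2.9\bigl((2.08\sqrt n)^k(n+1) + 15.5 n^2\bigr)\varepsilon$. The main obstacle is securing the specific constants: this requires meticulous first-order expansions of $S^{-1/2}$ around $S$, a clean absorption of all higher-order error products (such as $E_{j-1}^2$ or $E_{j-1}\cdot\varepsilon$) into the leading term --- which is precisely what the hypothesis $\max(n^2, 2.9(2.08\sqrt n)^k)\varepsilon \le 1/1000$ enables --- and a separate treatment of the initial exponential errors together with the residual $O(n^2\varepsilon)$ terms from the last normalization/squaring, which together contribute the $15.5 n^2$ term. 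The bound $y^{(k)} \in [o(1), 1+o(1)]^n$ then follows immediately from $\SM(\mathbf{x}) \in [0,1]^n$ and $E_k = o(1)$ under the stated hypothesis.
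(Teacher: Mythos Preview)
Your plan tracks the error relative to the \emph{exact} Softmax vector $y^{(j)} = \SM(\mathbf{x}/2^{k-j})$, and this is where it breaks down. The cancellation you invoke between $(\tilde\lambda_j-\lambda_j)y^{(j-1)}_i$ and $\lambda_j e^{(j-1)}_i$ does not deliver a $\sqrt{n}$ amplification. Concretely, take $y^{(j-1)}_1 = 1/\sqrt{n}$, $y^{(j-1)}_\ell = (1-1/\sqrt{n})/(n-1)$ for $\ell\ge 2$, so $S := \sum_\ell (y^{(j-1)}_\ell)^2 \approx 2/n$, and let $e^{(j-1)}_\ell = E$ for every $\ell$. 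Then $\tilde S - S \approx 2E$, so $\tilde\lambda_j - \lambda_j \approx -\tfrac12 S^{-3/2}\cdot 2E$ has order $n^{3/2}E$; multiplying by $y_1 = 1/\sqrt{n}$ gives order $nE$, while $\lambda_j e_1 = S^{-1/2}E$ is only order $\sqrt{n}E$. After squaring, $|\tilde y^{(j)}_1 - y^{(j)}_1|$ is $\Theta(nE)$, not $\Theta(\sqrt{n}E)$. So your one-step lemma with factor $2.08\sqrt{n}$ is false as stated; the recursion would blow up to $n^k$ rather than $n^{k/2}$.

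The paper sidesteps this by tracking the error relative to $C_j\exp(x_i/2^{k-j})$ with a \emph{floating} scalar $C_j$ (the same for all coordinates), setting $C_{j+1} = (\tilde\lambda_{j+1} C_j)^2$. The computed $\tilde\lambda_{j+1}$ is thus absorbed into the constant, and one never needs $|\tilde\lambda_{j+1}-\lambda_{j+1}|$; only the size bound $\tilde\lambda_{j+1}\le 1.018\sqrt{n}$ (from an auxiliary lemma controlling $\tilde\lambda_j$, $\tilde z_i$ and $|\sum_i \tilde y_i^{(j)}-1|$) enters the recursion $A_{j+1}\le 2.08\sqrt{n}\,A_j + 3.036\varepsilon$, with $A_0=\varepsilon$. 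The $(n+1)$ factor does not come from the base case: it appears only in the \emph{final} step, when the floating constant $C_k$ is eliminated by summing $|\tilde y_i^{(k)} - C_k\exp(x_i)|\le A_k$ over $i$ to pin down $C_k$ up to $nA_k/\sum_\ell\exp(x_\ell)$. The $15.5n^2\varepsilon$ term is the separate bound on $|\sum_i \tilde y_i^{(k)}-1|$ from that same auxiliary lemma. Your sketch has the right overall shape but misses this floating-constant device, which is precisely what makes the $\sqrt{n}$ amplification attainable.
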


\begin{proof} See Appendix~\ref{se:pfthm}.
\end{proof}

For any $\alpha > 2, \beta > 1$, for $n$ large enough, our proof can be adapted to yield
a bound where 
$2.09(2.08)^k$ is replaced by by $\beta \cdot \alpha^k$. Still, 
the error estimate is actually extremely pessimistic and far from being
observed in practice.

Looking at the proof in a more heuristic way, we are also able to
argue in Appendix~\ref{se:pfthm} that we should rather expect an error increase of~$O(n)$
for the normalize-and-square strategy, which, together with the
final normalization error (on average) yields a~$\approx \log M + 3(\log n)/2$
loss of precision overall.

\subsection{Description and cost analysis}
In this subsection, we give a description of the homomorphic
algorithm; this involves mostly explaining how to compute
the values $(\exp(x_i/2^k))_{1\le i \le n}$ at the first step, and then
$\lambda_j$ at each loop iteration $1\le j \le k$, as the other steps
can be performed directly using homomorphic addition and
multiplication. We refer to Subsection~\ref{ssec:polapprox} for a
general discussion of function evaluation through polynomial
approximation. In this section, we disregard the numerical error
issues, in particular in the context of polynomial evaluation.

We recall that we adopt an asymptotic viewpoint and analyze the
cost of our approach as a function of $n$ (the dimension) and $M$ (the
range). Our cost measures are the ones defined in
Section~\ref{ssec:complexitymodel}. 

\begin{assumption}
For this analysis, we assume that the number $N_0$ of slots of our HE
algorithm is greater than the Softmax dimension $n$. 
\end{assumption}

We shall let the dimension $n$ and the range $M$ tend to infinity
independently. We shall allow the precision $p$ to tend to infinity,
but no faster than $O(\log n)$ -- in view of the typical parameter
choices of CKKS this is a realistic assumption for large $n$.

\subsubsection{Auxiliary thread and main thread}
In order to prepare the description and the analysis of the parallel
version, we shall distinguish the computation of the normalization
factors $\lambda_j$, which we shall call the \emph{auxiliary thread};
we call the rest of the computation the \emph{main thread} (see
Figure~\ref{fig:threads} for an illustration).

It is natural to analyze separately those two parts, mostly because
the auxiliary thread is handling essentially one single real number at a
time, whereas the main thread operates over $n$ numbers at a
time. This distinction prepares the discussion of the case of "many
ciphertexts at once" from Section~\ref{sec:parallel}.

The full auxiliary thread, except the computation of the inverse square root, is described in~Algorithm~\ref{alg:AuxThread} below. We now analyze it. 
\begin{lemma}\label{prop:convert}
  Let $N_0$ be the number of slots of our HE implementation, and $n$ an
  integer dividing $N_0$. Starting from a ciphertext containing $x_1,
  \dots, x_n$ in slots $1, 1+N_0/n, \dots, 1 + (n-1)N_0/n$ and $0$ in the
  remaining slots, we can compute a ciphertext containing $\sum x_i^2$
  in slots $1, 1+N_0/n, \dots, 1 + (n-1)N_0/n$ using one multiplication
  and $\lceil\log n\rceil$ rotations; if $\lambda =
  1/\sqrt{\sum_{i=1}^n x_i^2}$, we can then compute a ciphertext
  containing $x_i/\lambda$ in slot $1 + iN_0/n$ using one call to the
  inverse square root function, two Hadamard products, and $2\lceil
  \log n\rceil$ rotations.
\end{lemma}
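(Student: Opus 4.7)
The proof splits cleanly into the two claims of the statement.

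For the first claim, I would exploit the sparse structure of the input ciphertext. A single Hadamard self-multiplication squares it, giving $x_i^2$ in slot $1+(i-1)N_0/n$ and $0$ in every other slot. I would then implement the standard rotate-and-add binary tree: for $j=1,\dots,\lceil \log n\rceil$, replace the current ciphertext $c$ by $c + \Rot(c, 2^{j-1}N_0/n)$. By an easy induction, after round $j$ each special slot contains the sum of $2^j$ cyclically consecutive values among the $x_i^2$, while the non-special slots remain at $0$ throughout (since both operands are $0$ there at every stage). After $\lceil \log n\rceil$ rounds every special slot carries $\sum_{i=1}^n x_i^2$, at a total cost of one multiplication and $\lceil \log n\rceil$ rotations.

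For the second claim, I would feed the resulting ciphertext to the inverse-square-root function (one call). Since every special slot carries the common value $\sum_i x_i^2$, the output carries $\lambda$ in every special slot; the polynomial approximation may take arbitrary values in the non-special slots, but these will be masked away at the next step. A Hadamard product with the stored original input then yields $\lambda\cdot x_i$ in slot $1+(i-1)N_0/n$, while the zero entries of the input annihilate the irrelevant contributions off the special slots. Combined with the first claim, this accounts for one inverse-square-root call and two Hadamard products in total; the remaining $\lceil \log n\rceil$ rotations in the stated $2\lceil \log n\rceil$ budget leave headroom for an implementation variant in which one first collapses $\sum x_i^2$ into a single slot and then broadcasts $\lambda$ back to the $n$ special slots via a reverse rotate-and-add tree, useful if one prefers to keep only a single value live while evaluating the inverse-square-root polynomial.

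The main delicacy of the argument is the slot bookkeeping through the rotation tree: one must verify the invariant that non-special slots stay at zero throughout the accumulation, and that the garbage introduced by evaluating the inverse square root outside the interval on which it has been approximated is harmlessly cancelled by the masking structure of the original ciphertext during the final Hadamard multiplication. Once these two invariants are in place, the multiplication, rotation, and function-call counts follow immediately from the construction.
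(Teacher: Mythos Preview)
Your argument for the first claim matches the paper exactly. For the second claim, the paper's procedure (Algorithm~\ref{alg:AuxThread}) is precisely the mask-and-broadcast variant you relegate to the end: after the inverse square root it multiplies by the plaintext mask $(1,0,\dots,0)$ to retain a single slot and then runs a second rotate-and-add tree to replicate $\lambda$ across the special positions; this is what accounts for the second Hadamard product and the additional $\lceil\log n\rceil$ rotations in the statement. Your direct ``multiply by the original input'' shortcut is correct in the idealized model of the lemma and indeed cheaper, but the paper's rationale for the detour is different from the one you suggest: it is not about keeping a single value live during the polynomial evaluation, but about CKKS numerics. After the accumulation loop the special slots do \emph{not} carry identical copies of $\sum_i x_i^2$ in a real implementation---the rotate-and-add errors differ slot by slot---and these slot-wise discrepancies are amplified by the inverse square root and would defeat the self-correcting normalization property of Algorithm~\ref{alg:Softmax}, which relies on the multiplicative error being the same across all coordinates. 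Masking to one slot and re-broadcasting forces all slots to share exactly the same value of~$\lambda$.
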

\begin{proof}
We use the procedure described in Algorithm~\ref{alg:AuxThread}.
\end{proof}
\def\ctI{\mathsf{ct_I}}
\def\ctO{\mathsf{ct_O}}
\def\mask{\mathsf{mask}}
\begin{algorithm}
\LinesNumbered
\caption{Auxiliary thread}\label{alg:AuxThread}
\SetKwComment{Comment}{/* }{ */}
\KwData{$\ctI$ = $(x_1, 0, \dots, x_2, 0, \dots, x_n, 0, \dots , 0)$ containing $x_i$ in slot $iN/n$}
\KwResult{$\ctO$ = $(1/\lambda, 0, \dots, 1/\lambda, 0, \dots, 1/\lambda, 0, \dots, 0)$, (nonzero values in slots $iN/n$), where $\lambda=(\sum_{i=1}^n x_i^2)^{-1/2}$.}
$\ctO \gets \ctI \odot \ctI$\; 
$\mask \gets (1, 0, ..., 0)$\; 
\For{$j$ \textrm{from} $0$ \textrm{to} $\lceil \log (n) \rceil -1$}{
    $\ctO \gets \ctO + \mathrm{Rot}(\ctO, -N/n \cdot 2^j)$ 
    }
$\ctO \leftarrow \mathrm{InvSqrt(\ctO)}$\;
$\ctO \leftarrow \ctO \odot \mask$\;
\For{$j$ \textrm{from} $0$ \textrm{to} $\lceil \log (n) \rceil -1$}{
    $\ctO \gets \ctO + \mathrm{Rot}(\ctO, N/n \cdot 2^j)$ 
    }   
\Return{$\ctO$}
   \end{algorithm}

Steps 7-10 might seem superfluous, as they deal with a blind spot of
our numerical model. In practice, when implementing the algorithm,
they are required to achieve a good accuracy in the worst case.
 
Our numerical model overlooks the fact that in CKKS, the error induced
by~Steps~3-5 is not the same in all slots, while the self-correcting
property of Algorithm~\ref{alg:Softmax} is limited to errors which are
the same in all slots. After Steps~3-5, we might observe significant
variations between slots, which are amplified by the inverse square
root computation. Without correction, this makes the overall precision
quite poor. Steps 7-10 maintain the same level of error, but average
most of the error over the slots -- the part of the error which is
averaged is then removed by the self-correcting capability of
Algorithm~\ref{alg:Softmax}.

\subsubsection{Asymptotic analysis of the main thread}
In levelled HE, the cost of the main thread for $\SM$ is: 
\begin{itemize}
\item one evaluation of $\exp$ over $[-M/2^k, 0]$; 
\item two multiplications at each loop iteration. 
\end{itemize}

\begin{theorem}\label{th:main_thread}
  On input $(x_i)\in [-M, 0]^n$, with parameter $k = \log M - \log \ln n +
  O(1)$, the main thread of Algorithm~\ref{alg:Softmax} 
  uses 
$  2\log M - \log\ln n + O(1)$
 levels and $  O\left(\sqrt{\log n} + \log M\right)$
  ciphertext-ciphertext multiplications.

\end{theorem}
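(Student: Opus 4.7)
The plan is to break the main thread into its two components: the initial exponential evaluation of $y_i^{(0)} = \exp(x_i/2^k)$, and the $k$ subsequent loop iterations, each performing a Hadamard multiplication by $\lambda_j$ followed by a squaring. Since normalization factors $\lambda_j$ are computed by the auxiliary thread (and thus excluded), the per-iteration cost of the main thread is exactly one ciphertext-ciphertext multiplication for normalization and one for squaring: this contributes $2k$ multiplicative levels and $2k$ ciphertext-ciphertext multiplications. With the choice $k = \log M - \log \ln n + O(1)$, this gives $2\log M - 2\log\ln n + O(1)$ levels and $O(\log M)$ multiplications from the loop.

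The key step is to control the cost of the exponential evaluation. Since $k = \log M - \log\ln n + O(1)$, we have $M/2^k = \Theta(\ln n)$, so we must approximate $\exp$ on an interval of width $O(\ln n)$. Recall from Subsection~\ref{ssec:polapprox} that a polynomial approximation of $\exp$ on $[-L, 0]$ achieving absolute error $\varepsilon = 2^{-p}$ requires degree $d = O(L + p)$ (via Chebyshev truncation, using that the Chebyshev coefficients of $\exp$ on $[-L,0]$ decay geometrically after index $\sim L$). Under our assumption $p = O(\log n)$ and with $L = O(\ln n)$, we obtain $d = O(\log n)$. By the two facts recalled in Subsection~\ref{ssec:polapprox}, evaluating this polynomial costs $\lceil \log(d+1) \rceil = \log\log n + O(1)$ levels and $O(\sqrt{d}) = O(\sqrt{\log n})$ ciphertext-ciphertext multiplications.

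Adding the two contributions yields the total level count
\begin{equation*}
2k + \log\log n + O(1) \;=\; 2\log M - 2\log\ln n + \log\log n + O(1) \;=\; 2\log M - \log\ln n + O(1),
\end{equation*}
where the last equality uses $\log\log n = \log\ln n + O(1)$. Similarly, the total number of ciphertext-ciphertext multiplications is $2k + O(\sqrt{\log n}) = O(\log M + \sqrt{\log n})$, as claimed.

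The main obstacle is really the bookkeeping around the degree of the exponential polynomial: one must carefully track how the precision requirement $p = O(\log n)$ interacts with the range $L = O(\ln n)$ so that $d = O(\log n)$ and both $\log\log n$ and $\sqrt{\log n}$ match the stated bounds cleanly, and then verify that the arithmetic $2\log M - 2\log\ln n + \log\log n + O(1)$ collapses to $2\log M - \log\ln n + O(1)$. Once these estimates are in place, the rest is a straightforward summation over the two subroutines.
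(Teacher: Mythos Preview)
Your proposal is correct and follows essentially the same approach as the paper: decompose into the loop ($2$ levels and $2$ multiplications per iteration, hence $2k$ of each) plus the initial exponential over $[-M/2^k,0]=[-O(\log n),0]$, then combine using $\log\log n=\log\ln n+O(1)$. The only cosmetic difference is that the paper invokes Lemma~\ref{le:exp} (based on a Taylor expansion) for the exponential cost, whereas you argue the degree bound $d=O(L+p)=O(\log n)$ directly via Chebyshev decay; both routes yield $\log\log n+O(1)$ levels and $O(\sqrt{\log n})$ multiplications, so the proofs are interchangeable.
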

\begin{proof}
  The loop costs 2 levels per iteration; as $k = \log M - \log \ln n + O(1)$, the estimate follows from the cost of evaluating an exponential over $[-M/2^k, 0] = [-O(\log n), 0]$; see Lemma~\ref{le:exp} in Appendix~\ref{app:aux_functions}. 
\end{proof}

Note that the cost implictly depends on the precision $p$ since we
assumed that $p = O(\log n)$.

\begin{table}
\[
\begin{array}{|c|c|c|}
\hline
  \textrm{Input range} & \exp & 
  \textrm{total depth (main thrd.)}
  \\
  \hline
  [-2^k, 0], k\ge 2  & 4 
  & 2k + 4   
  \\
  \hline 
\end{array}
\]
\caption{\label{tab:depth_main}Level usage of the main thread of Algorithm~\ref{alg:Softmax} in our experiments}
\end{table}

Table~\ref{tab:depth_main} gives the non-asymptotic, practical depth
of the computations for our experiments; in the latter, the depth of
the main thread did not depend on the Softmax dimension $n$, only on
the input range.  Our parameters were chosen in order to achieve
$\approx 16$ bits of accuracy.

\subsubsection{Asymptotic analysis of the auxiliary thread}\label{sse:aux_anal}
We now turn to the study of the auxiliary thread. We make heavy use of
Remark~\ref{rem:approx_renorm_practice}: in the
description of the algorithm, we allow ourselves to replace the
inverse square root by a rough approximation to it (i.e., we replace
$1/\sqrt{u}$ by $P(u)$ such that $ P(u)^2\cdot u \in [1/\theta,
  \theta]$, for $\theta$ a constant not too far from $1$ -- we will
write $2^{O(1)}$ instead of $\theta$ in the sequel).

In order to estimate the cost of this rough approximation to the
inverse square root, we need to make more explicit the input/output
intervals of each call to the approximate inverse square root. 
\begin{itemize}
\item Step 1: the input $\sum_{i=1}^n y_i^{(0)^2}$ is in
  $[n \exp(-M/2^{k-1}), n]$, and we compute an output such that $\lambda_0^2 (\sum_{i=1}^n y_i^{(0)^2})$ is not too far from $1$, or in asymptotic terms $\in [2^{-O(1)}, 2^{O(1)}]$. 
\item Step $j\in [2, k-1]$: the input is such that $\sum_{i=1}^n y_i^{(j)^2} \in
  [2^{-O(1)}/n, 2^{O(1)}]$ (see Lemma~\ref{le:cs}). Similarly, we expect to have $\lambda_j^2 \left(\sum_{i=1}^n y_i^{(j)^2}\right) \in [2^{-O(1)}, 2^{O(1)}]$.
\item Step $j = k$ is similar, except that we require 
  \[
  \lambda_k^2 \left(\sum_{i=1}^n y_i^{(k)^2}\right) \in [1-2^{-p}, 1+2^{-p}].
  \]
\end{itemize}
  
\begin{theorem}\label{th:aux_thread}
  For each iteration of the loop, the auxiliary thread of
  Algorithm~\ref{alg:Softmax} 
  requires $\log n (1 + o(1))$ levels and
  $O(\sqrt{n\log n})$ multiplications.
\end{theorem}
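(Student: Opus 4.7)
The plan is to track the cost of Algorithm~\ref{alg:AuxThread} block by block, applying Lemma~\ref{prop:convert} for the sum-of-squares and broadcast phases, and reducing the rest to a single polynomial approximation of $x^{-1/2}$ whose degree I bound case by case. Concretely, the sum-of-squares (Steps~1--5) and the masking/re-distribution (Steps~7--10) together contribute $O(1)$ ciphertext-ciphertext multiplications, $O(\log n)$ rotations and $O(1)$ multiplicative levels by Lemma~\ref{prop:convert}, so they are strictly lower-order compared to the target bounds.

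The remaining cost is that of the polynomial approximation of $x^{-1/2}$ (Step~6), on an input interval that depends on the iteration index, as identified in Section~\ref{sse:aux_anal}. For iteration $j=1$ the input lies in $[1/n,\, n]$ and only a constant-quality output is required; for $j \in [2, k-1]$ the input lies in $[\Theta(1/n),\, \Theta(1)]$ by Lemma~\ref{le:cs} with the same rough precision requirement; for $j = k$ the range is again of ratio $O(n)$, but absolute precision $2^{-p} = n^{-O(1)}$ is needed since $p = O(\log n)$. I would then invoke the Chebyshev-series bounds of Appendix~\ref{app:aux_functions}: for $x^{-1/2}$ on an interval $[a,b]$ of ratio $\rho = b/a$, the Bernstein ellipse radius is $1 + \Theta(1/\sqrt{\rho})$, so a polynomial of degree $O(\sqrt{\rho}\,\log(1/\varepsilon))$ suffices to reach uniform absolute precision $\varepsilon$. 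Translating the three precision requirements above into uniform absolute errors yields degrees $O(n\log n)$, $O(\sqrt{n})$ and $O(\sqrt{n}\,\log n)$ respectively, so the worst-case degree over the loop is $d = O(n \log n)$.

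The conclusion follows from the Paterson--Stockmeyer estimates recalled in Section~\ref{ssec:polapprox}: evaluating a polynomial of degree $d$ uses $\lceil \log(d+1) \rceil$ multiplicative levels and $O(\sqrt{d})$ ciphertext-ciphertext multiplications. With $d = O(n \log n)$, this gives $\log n + O(\log\log n) = \log n\,(1+o(1))$ levels and $O(\sqrt{n \log n})$ multiplications, dominating the $O(1)$ contributions from the rest of Algorithm~\ref{alg:AuxThread} and yielding the claimed bounds.

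The main obstacle is the bookkeeping in the middle step. I expect the first iteration to be the binding case: it forces one to convert the desired constant-factor \emph{relative} accuracy on $[1/n,\, n]$ into a uniform absolute error $O(1/\sqrt{n})$ before Chebyshev can be applied cleanly, and only then does the degree land at $O(n\log n)$ rather than a needlessly larger value. The companion piece is checking that the precise last iteration, despite its $n^{-O(1)}$ target, saturates only at $O(\sqrt{n}\log n)$ on its narrower interval and hence does not alter the asymptotics; getting the interaction between interval ratio and target precision right in both cases is the technical crux.
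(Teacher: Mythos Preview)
Your outline is correct and lands on the stated bounds, but the analytic tool you use for Step~6 differs from the paper's. The paper's proof is a one-line appeal to Lemma~\ref{le:invsqrt}, whose construction is a degree-$O(A\log A)$ \emph{Taylor} expansion of $x^{-1/2}$ at $1$ (to reach constant relative error) followed by $O(\log p)$ Newton steps; applied with $A=\Theta(n)$ and $p=O(\log n)$ on the generic-iteration interval of ratio $\Theta(n)$, this already gives $\log n\,(1+o(1))$ levels and $O(\sqrt{n\log n})$ multiplications. Your route instead uses a Chebyshev/Bernstein-ellipse estimate yielding degree $O(\sqrt{\rho}\,\log(1/\varepsilon))$, which produces much smaller degrees for iterations $j\ge 2$ (namely $O(\sqrt n)$ or $O(\sqrt n\log n)$) and only saturates the theorem at the first iteration through its wider ratio-$n^2$ interval. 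That is a legitimate alternative and in fact matches the heuristic the paper invokes just after Lemma~\ref{le:invsqrt} (minimax degree $\approx\sqrt{A}$) to argue the sharper $(\log n)/2$ level count; what you gain is a tighter per-iteration picture, at the price of having to justify the analytic-continuation estimate. One caution: Appendix~\ref{app:aux_functions} does \emph{not} contain the Chebyshev bounds you cite --- it proves Lemma~\ref{le:invsqrt} via Taylor plus Newton --- so you would need to supply the Bernstein-ellipse estimate yourself rather than refer to the appendix.
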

\begin{proof}
  The result follows from the discussion above and from Lemma~\ref{le:invsqrt} on the polynomial approximation of the  inverse square root function.
\end{proof}

As pointed in Appendix~\ref{app:aux_functions}, experimentally, using optimal polynomial approximation, the level consumption can be reduced to $(\log n)/2 (1 + o(1))$. This is illustrated by the level usage in Table~\ref{tab:depth_aux_thread}. Similarly, under the same heuristic assumption, the number of ciphertext-ciphertext multiplications should be of the order of $O(n^{1/4})$ rather than $O(\sqrt{n \log n})$.

\subsubsection{Practical values of the degrees of the polynomial approximations}
Table~\ref{tab:depth_aux_thread} states the depth of the computation for
various dimensions, using the polynomial degrees used in our
implementation rather than asymptotic estimations.  The total level
consumption is the total consumption of the~$k$ calls
to~Algorithm~\ref{alg:AuxThread}.

In practice, the first and the last calls differ from the other ones: for our parameter choices, the first one addresses a somewhat larger interval, while the last one needs to be precise, contrary to the previous ones (see Remark~\ref{rem:approx_renorm_practice}). 

\begin{table}
\[
\begin{array}{|c|c|c|c|c|}
 \hline
  {\textrm{range}} & \textrm{dim.} & \textrm{sq.\& mask} & x^{-1/2} & \textrm{depth} \\
  & $n$ & (\textrm{Steps}~1, 7) & (\textrm{Step}~6) & \textrm{(aux. thread)} \\
  \hline
  [-2^k, 0] & 128  & 2^{\times k} &6, 4^{\times (k-2)}, 7 & %
  6k+5 
  \\
  \hline
  [-2^k, 0] & 256  & 2^{\times k} & 6, 5^{\times(k-2)}, 7 & %
  7k+3 %
  \\
  \hline
  [-2^k, 0] & 512  & 2^{\times k} & 6, 5^{\times(k-2)}, 7 & 
  7k+3
  \\
  \hline
  [-2^k, 0] &  1024 & 2^{\times k} & 6, 6^{\times{(k-2)}}, 8 & 
  8k+2
  \\
  \hline
\end{array}
\]
\caption{\label{tab:depth_aux_thread}Level usage of polynomial approximations and total level usage for the auxiliary thread in our experiments. The meaning of the three figures in the column $x^{-1/2}$  is as follows: the first figure is the depth for loop iteration $1$, the notation $d^{\times(k-2)}$ means depth $d$ for the $(k-2)$ intermediate loop iterations, and the last figure is for loop iteration $k$. Similarly, the notation $d^{\times k}$ in column sq.\& mask means depth $d$ for all the $k$ loop iterations. }
\end{table}

\subsection{Overall evaluation and comparison with HETAL}
Using Theorems~\ref{th:main_thread} and~\ref{th:aux_thread}, we see that our total level consumption is, in all cases, 
\begin{align*}
    & \le (2 \log M + (\log M - \log \log n) \log n) (1 + o(1)) \\
    & \sim (\log M) (\log n), 
\end{align*}
when $n, M$ tend to infinity.

Again, the discussion concluding Section~\ref{sse:aux_anal} suggests that by using optimal polynomial approximations for the inverse square root, this estimate should be divided by 2, i.e. $(\log M) \frac{\log n}{2}$ (see the discussion following Theorem~\ref{th:aux_thread}. 

We now compare the algorithm to HETAL~\cite{HETAL}, the current state of
the art for HE-Softmax. We sketch an analysis of the latter in
Appendix~\ref{app:hetal}; this analysis suggests that HETAL level
usage is $\sim (\log n) (\log M)$.

This shows that our algorithm gains a factor of 2 regarding the level
consumption. It should however be noted that it comes at the cost of a
significant increase in the number of ciphertext-ciphertext
multiplications compared to HETAL ($O(n^{1/4}\log M)$ against
essentially $O((\log n) (\log M))$). We recall however that level
consumption is, asymptotically, a finer measure of cost as it is
directly related to the total bootstrapping cost.

\section{Computing many Softmax at once}\label{sec:parallel}
In this section, we discuss the situation where we have a large number $L$ of Softmax on inputs $({\mathbf x}^{\{\ell\}}_i)_{1\le i\le n}$, $1\le \ell \le L$, to be computed at the same time. 

\subsection{The one ciphertext case}

If we want to compute in parallel $L$ Softmax of dimension $n$ and
$Ln\leq N_0$, the intrinsic SIMD parallelism of CKKS, which can perform
simultaneously up to $N_0$ identical computation on different values,
is sufficient.

Indeed, in this case, we may notice that Algorithm~\ref{alg:AuxThread},
on input
\[(x_1^{\{1\}}, x_1^{\{2\}}, \dots, x_1^{\{L\}},\mathbf{0}_{N_0/n-L} ,x_2^{\{1\}},x_2^{\{2\}},\dots, x_n^{\{L\}},\mathbf{0}_{N_0/n-L}),
\] where we let $\mathbf{0}_j$ denote a sequence of $j$ zeros,
already returns the desired ciphertext containing
$x_i^{\{\ell\}} / \sqrt{\sum_{i=1}^n x_i^{\{\ell\}^2}}$ in slot $i + \ell n$. 

With this packing, there is thus nothing to modify, and we can take
full advantage of the SIMD capability of an HE system like CKKS: the
computation time is the same for $N_0/n$ Softmax as it would be for
a single one.

\subsection{The many ciphertexts case}
We now assume that $Ln \ge N_0$, and, for the sake of
simplicity, $L$ divides $N_0$ and that $N_0$ divides $Ln$. We can reduce to the
previous case by splitting into $Ln/N_0$ times $N_0/n$ Softmax of size $n$, but this is sub-optimal, as we only use $N_0/n$ slots of the ciphertext of the auxiliary thread. We let $m = Ln/N_0$ denote the number of ciphertexts of the main thread.

For $L \le N_0$, one should factor the auxiliary thread computation
coming from the various ciphertexts; this means that we pack all the
auxiliary threads into one single ciphertext as long as we compute at
most $N_0$ Softmax at once.

Algorithm~\ref{alg:AuxThread} can be adapted to that setting. Recall our choice
of packing: 
\begin{align*}
\textrm{ctxt}_j = & (x_{1 + jN_0/L}^{\{1\}}, x_{1 + jN_0/L}^{\{2\}}, \dots, x_{1+jN_0/L}^{\{L\}}, \\
& x_{2+jN_0/L}^{\{1\}}, \dots, x_{2+jN_0/L}^{\{L\}},\\
& \dots, \\
& x_{(j+1) N_0/L}^{\{1\}}, \dots x_{(j+1) N_0/L}^{\{L\}}), 0\le j\le m-1.
\end{align*}
It allows to have a total number of rotations that is equal to $(m+1) \log (N_0/L) = (m+1) \log (n/m)$ in that case.

A similar idea can be applied to HETAL: as each comparison layer
replaces two coordinates in a Softmax by their maximum, the slot usage
at each such layer is divided by a factor of 2. The remaining slots
can thus be packed into half the number of ciphertexts, until we obtain
one single ciphertext.

We introduce the \emph{amortized level consumption per ciphertext} as
a suitable measure of the computational cost, as it is directly
related to the total number of bootstrapings to be performed, which again
dominates the total cost. In order to simplify the statements, we
focus on the asymptotic dependency in $m$, the notation $o_m(1)$
meaning that we neglect lower order terms in $m$. 

\begin{theorem}\label{th:parallel_levels}
Assume that $M > \log n$, $N_0 \ge m, n$. On input $\ct_1, \ct_2,$ $\dots, \ct_m$ packing $L = mN_0/n$ Softmax of dimension $n$, the total amortized level usage per ciphertext of the many-ciphertext version of Algorithm~\ref{alg:Softmax} is 
\[
2\log M + \log \log n + O_m(1).%
\]

The amortized number of ciphertext-ciphertext multiplications per ciphertext is $O(\log M + \sqrt{\log n}) + O_m(1)$. 
\end{theorem}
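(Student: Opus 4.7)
The plan is to decompose the total cost into a main-thread part, incurred independently on each of the $m$ input ciphertexts, and an auxiliary-thread part which, thanks to the shared packing described at the beginning of Section~\ref{sec:parallel}, is computed only once and then amortized over the $m$ main-thread ciphertexts. The two parts are then summed, with the auxiliary-thread contribution divided by $m$.

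For the main thread, I would directly invoke Theorem~\ref{th:main_thread}: with parameter $k = \log M - \log\ln n + O(1)$, each of the $m$ ciphertexts independently uses $2\log M - \log\log n + O(1)$ multiplicative levels (the leading $2\log M$ coming from the $2k$ inner-loop levels and the remainder from the initial $\exp$ evaluation over $[-O(\log n), 0]$) and $O(\sqrt{\log n}+\log M)$ ciphertext-ciphertext multiplications. Since each such level and multiplication applies to a single ciphertext, the amortized-per-ciphertext main-thread cost coincides with the unamortized cost; this delivers the $2\log M$ leading term as well as the dominant contribution to the multiplication count of the claim.

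For the auxiliary thread, I would walk through the four sub-steps that occur at each of the $k$ loop iterations: (i) square each of the $m$ main-thread ciphertexts and aggregate them into a single packed auxiliary ciphertext, costing $O(1)$ levels and $O(m)$ additions/multiplications; (ii) perform within-auxiliary-ciphertext partial summation by $\log(n/m)$ rotations (no extra levels); (iii) evaluate the approximate inverse square root, which by Theorem~\ref{th:aux_thread} consumes $(\log n)(1+o(1))$ levels and $O(\sqrt{n\log n})$ ciphertext multiplications; (iv) redistribute the result back onto the $m$ main-thread ciphertexts via a mask and $\log(n/m)$ rotations, costing $O(1)$ levels and $O(m+\log(n/m))$ rotations. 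Summed over the $k = O(\log M)$ iterations, the total auxiliary-thread cost is $(\log M)(\log n)(1+o(1))$ levels and $O((\log M)\sqrt{n\log n})$ multiplications. Upon amortization by $m$, both terms are absorbed into $O_m(1)$ for fixed $n, M$. Adding to the main-thread count yields the claimed $2\log M + \log\log n + O_m(1)$ amortized levels and $O(\log M + \sqrt{\log n}) + O_m(1)$ amortized multiplications per ciphertext.

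The main obstacle I anticipate is the careful bookkeeping of the interface between the two threads: the level at which $\lambda_j$ is returned by the auxiliary thread must be compatible with the current level of the main-thread ciphertexts, otherwise the multiplication $z_i \gets \lambda_j \cdot y^{(j-1)}_i$ would waste levels and perturb the $2\log M$ bound. A secondary, purely combinatorial obstacle is to verify that the $O(m)$ cross-ciphertext aggregation and redistribution operations of sub-steps~(i) and~(iv) are indeed absorbed into the $O_m(1)$ remainder after dividing by $m$, and that the first and last iterations of the auxiliary thread (which target slightly different input/output ranges for the inverse square root, per Section~\ref{sse:aux_anal}) contribute only $O(1)$ additional levels overall.
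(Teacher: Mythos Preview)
Your decomposition into a main-thread contribution read off Theorem~\ref{th:main_thread} and an auxiliary-thread contribution that, once packed into a single ciphertext, amortizes to $O_m(1)$ is exactly the paper's reasoning; the paper itself states Theorem~\ref{th:parallel_levels} without proof and only spells the argument out for the closely related variant immediately following Algorithm~\ref{alg:Softmax1B}. One bookkeeping point: your own invocation of Theorem~\ref{th:main_thread} gives $2\log M - \log\log n + O(1)$ for the main thread, so the sum you actually derive is $2\log M - \log\log n + O_m(1)$, not $+\log\log n$ --- the plus sign in the stated bound appears to be a slip in the paper rather than something your argument needs to account for.
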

Our amortized level cost is thus almost constant in $\log n$; 
in comparison, we estimate the amortized level usage of HETAL to be
$\ge 4 \log (M) + O_m(1)$, see Appendix~\ref{app:hetal}.

\subsection{A variant of the main algorithm}
We now propose a variant of~Algorithm~\ref{alg:Softmax}, which reduces
the level consumption of the main thread per loop iteration from 2 to
1. The total depth is then $k+1$ plus the depth for exponential
evaluation. The main weakness of this approach is the fact that its
implementation requires designing one auxiliary function ( $x \mapsto
x^{-1/2^j}$) for each loop iteration, making it more difficult to
implement. It also performs more multiplications overall, which makes
it less efficient in the case of a single ciphertext.

\begin{algorithm}
\caption{$\SM$ (version B)}\label{alg:Softmax1B}
\KwData{$\mathbf{x} \in [-M, 0]^n, k \in \Z_{> 0}$}
\KwResult{$\mathbf{y} = \SM(\mathbf{x}).$}
  $\lambda \gets 1$ \;
  $y^{(0)}_i \gets \exp(x_i/2^{k}), i = 1..n$ \; 
  \For{$j$ \textrm{from} $1$ \textrm{to} $k$}{ 
    $\lambda_j \gets \left(\sum_{i=1}^n {y^{(j-1)^2}_i}\right)^{-2^{-(j-1)}}$ \; 
    $\lambda \gets \lambda \cdot \lambda_j$ \;
    $z_i \gets \lambda\cdot y^{(0)}_i$, $i = 1..n$ \; 
    $y^{(j)}_i \gets z_i^{2^j}$, $i = 1..n$ \; 
  }
\Return $(y^{(k)})$\; 
\end{algorithm}

This modification has substantial impact in the many ciphertexts
case, as the amortized level consumption of the latter is
asymptotically equivalent to the level consumption of the main thread.

From a non-asymptotic point of view, if our range is such that the
total level usage fits into the level budget, this approach allows us
to totally avoid bootstrapping the main thread, which has a major
impact when the number of Softmax to be computed at once (and, thus,
the number of ciphertexts) is large. We illustrate this example in
Section~\ref{sec:implementation} for $M = 128$.

From an asymptotic point of view, we obtain the following result.
\begin{theorem}
With the notations and assumptions of
Theorem~\ref{th:parallel_levels}, on input $\ct_1, \dots, \ct_m$, with $N_0
\ge m, n$, the amortized level consumption by ciphertext of the
many-ciphertexts version of Algorithm~\ref{alg:Softmax1B} is $\log M
+ O_m(1)$; the amortized number of ciphertext-ciphertext 
multiplications is $O(\log^2 M + \sqrt{\log n}) + O_m(1)$.
\end{theorem}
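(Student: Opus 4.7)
The plan is to carry over the thread decomposition used in the proof of Theorem~\ref{th:parallel_levels}: separately bound the cost of the main thread, which spans $m$ ciphertexts, and of the auxiliary thread, which, since $L = mN_0/n \le N_0$, can again be packed into a single ciphertext; then amortize by dividing the auxiliary contribution by $m$. The qualitative improvement over Algorithm~\ref{alg:Softmax} comes from the fact that the main thread of Algorithm~\ref{alg:Softmax1B} has a critical path of length $\sim \log M$ rather than $\sim 2 \log M$, at the price of a quadratic increase in the main-thread multiplication count.

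For the main thread, I would first recall that $y^{(0)}_i = \exp(x_i/2^k)$ is obtained by polynomial approximation of $\exp$ over $[-O(\log n), 0]$, using $\log\log n + O(1)$ levels and $O(\sqrt{\log n})$ multiplications per ciphertext (Lemma~\ref{le:exp}). At iteration $j$, the main thread restarts from $y^{(0)}$ with one multiplication by the running $\lambda$ followed by $j$ squarings to produce $y^{(j)}_i = (\lambda\cdot y^{(0)}_i)^{2^j}$, a sub-computation of depth $j+1$. Since all iterations start from $y^{(0)}$ at the same level, the main thread's critical depth is driven by $j=k$, giving $k+1+\log\log n+O(1) = \log M + O(1)$. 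The multiplication count, by contrast, adds across iterations: $\sum_{j=1}^k (j+1) + O(\sqrt{\log n}) = O(\log^2 M + \sqrt{\log n})$ per main-thread ciphertext, which is exactly the origin of the $\log^2 M$ term in the statement.

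For the auxiliary thread, the packing strategy introduced after Algorithm~\ref{alg:AuxThread} lets the $k$ inverse-power evaluations live on a single ciphertext, with $O(m)$ rotations per outer iteration for aggregation and broadcast across the main-thread ciphertexts. Each function $x \mapsto x^{-2^{-(j-1)}}$ is evaluated on an input range that, as in Section~\ref{sse:aux_anal}, lies between $\Theta(1/n)$ and $\Theta(1)$; by the same type of polynomial-approximation argument as for the inverse square root (Lemma~\ref{le:invsqrt}) one gets depth $O(\log n)$ and $O(\sqrt{n\log n})$ multiplications per iteration, hence a total auxiliary-thread cost of $O(\log M \cdot \log n)$ levels and $O(\log M \cdot \sqrt{n\log n})$ multiplications on a single ciphertext.

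Amortizing over the $m$ main-thread ciphertexts then yields a per-ciphertext level usage of $\log M + O(1) + O(\log M \cdot \log n)/m = \log M + O_m(1)$ and a per-ciphertext multiplication count of $O(\log^2 M + \sqrt{\log n}) + O_m(1)$, as claimed. The main obstacle I foresee is making the auxiliary-thread analysis rigorous: one has to confirm that each $x \mapsto x^{-2^{-(j-1)}}$ admits, on its shrinking-towards-$1$ input window, a polynomial approximation with the same asymptotic depth and degree as inverse square root; heuristically this is easier than the $j=1$ case because the exponent shrinks geometrically and the function flattens near~$1$, but writing it down cleanly requires some care along the lines of Section~\ref{sse:aux_anal}.
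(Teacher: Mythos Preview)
Your proposal is correct and follows essentially the same thread decomposition and amortization argument as the paper's own proof: main-thread depth $\log\log n$ for the exponential plus $k+1$ for the final iteration gives $\log M + O(1)$, the $O(\log^2 M)$ multiplication term arises from summing the $j$ squarings over $j\le k$, and the single-ciphertext auxiliary thread contributes $O_m(1)$ after dividing by $m$. The paper's version is terser and does not explicitly discuss the $x\mapsto x^{-2^{-(j-1)}}$ approximations you flag, but otherwise the reasoning is the same.
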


  This result follows from the fact that the amortized level cost per
  ciphertext of the main thread is $\log (M/2^k) = \log \log n$ for
  computing the exponential, and $1$ for each of the $k = \log M - \log
  \log n + O_m(1)$ iterations. The amortized cost per ciphertext of
  the auxiliary thread is $o_m(1)$ under the assumption that $N_0 \ge
  m$. For the number of multiplications, the $\sqrt{\log n}$ term
  comes from the evaluation of $\exp$, while the $O((\log M)^2)$ term
  comes from the step $y_i^{(j)} \leftarrow z_i^{2^j}$ evaluated for
  $j$ from  $1$ to $O(\log M)$.

\section{Implementation results}\label{sec:implementation}
We have implemented Algorithms~\ref{alg:Softmax} and~\ref{alg:Softmax1B}. In this section, we describe our implementation and give experimental results supporting our claims regarding efficiency and accuracy. 

\subsection{Implementation of Algorithms~\ref{alg:Softmax} and~\ref{alg:Softmax1B}}
\subsubsection{Software and HE parameters}
Our implementation uses the C++ HEaaN library\footnote{CryptoLab HEaaN library, which is available at https://www.heaan.it}. We use the FGb parameter set, one of the preset HEaaN parameter sets. The corresponding parameters are described in Table~\ref{tab:base}, together with an estimate for the cost of the homomorphic operations. 

This parameter set gives precision (the precision of additions,
rotations, multiplications) $\approx 29$ bits and bootstrapping
precision (the precision of the bootstrapping process) $\approx 22$
bits. Overall, 9 multiplicative levels, numbered from 12 to 4, are
available before bootstrapping; we can go down to level 0 if we do not
need to bootstrap (for instance, upon returning). As already
mentioned, bootstrapping restores the full multiplicative budget.

In~Table~\ref{tab:fgb_data}, we give timings (in a single-thread CPU
context) for the main operations at maximum level $\ell = 12$ (recall
that the cost of additions, rotations and multiplications is roughly
linear in $\ell + 1$).

\begin{table}[H]
\caption{\label{tab:fgb_data}Overview of the base FGb parameters in the HEaaN library and times for each HE operation. $\log(QP), N_0, L, \lambda$ denote the bit lengths of the largest modulus,
  the number of slots (recall that the ring degree $N$ is $2N_0$),
  the multiplicative depth,
  and the security parameter. Precision is given in bits. 
  }
\begin{tabular}{|c|c|c|c|c|c|c|}
\hline
{$\log(QP)$} & $N$ & $N_0$ & $L$ & $h$ & $\lambda$ & $\preci\,(\mult ~/ \BTS)$\\
\hline
{1555} & $2^{16}$ & $2^{15}$ & 9 & 192 & 128 & -29 ~/ -22 \\
\hline \hline
\end{tabular}
\begin{tabular}{|c|c|c|c|c|c|}
\hline
\multicolumn{6}{|c|}{Time of each homomorphic operation}\\\hline
  $\enc$& $\dec$ & $\add$ & $\rot$ & $\BTS$ & $\mult~(\textrm{pt}~/ \ct)$ \\\hline
  64ms & 14ms & 2.1ms & 130ms & 17s & 2.6ms ~/ 210ms \\\hline
\end{tabular}
\label{tab:base}
\end{table}

\subsubsection{Designing auxiliary functions}\label{ssec:implem-aux-functions}
Several options are possible for implementing the function $x\mapsto x^{-1/2}$, combining minimax approximants and Newton's method; the former choice minimizes the level usage, while the latter reduces the number of multiplications. We have chosen to use minimax approximants, computed using the Sollya~\cite{Sollya} tool. In the case of the inverse square root, we use the weighted version of the Remez algorithm, in order to obtain a weighted approximant, namely, a polynomial $P$ which minimizes $\|P(x)\sqrt{x} - 1\|_\infty$ rather than $\|P(x) - 1/\sqrt{x}\|_\infty$.
For the exponential function, we have also used minimax approximations. We have chosen degrees of the form $2^t - 1$, as it maximizes accuracy for a fixed-level budget of $t$.

The implementation of approximate renormalization (see~Remark~\ref{rem:approx_renorm_practice}) requires some care. First, it should be pointed out that the input of each of the renormalization steps depends on $n$ (but an implementation valid in dimension $n$ will also be valid in smaller dimensions). Using approximate normalization steps means that a given step returns a value $\lambda$ such that $|\lambda^2 \sum_{i=1}^n y_i^{(j)^2} - 1| \le \alpha$. As a consequence, we have $\sum_{i=1}^n y_i^{(j+1)} \in [1-\alpha, 1+\alpha]$ and, using Lemma.~\ref{le:cs}, the input of the next renormalization step is guaranteed to be in  $[(1-\alpha)^2/n, (1+\alpha)^2]$. Our design thus uses, for Steps $2$ to $(k-1)$, a polynomial such that $|x P(x)^2 - 1|\le \alpha$ over the interval $[(1-\alpha)^2/n, (1+\alpha)^2]$. 

\subsubsection{Bootstrapping strategy in the many ciphertexts case}
As discussed in Section~\ref{sec:parallel}, an efficient implementation of the many ciphertexts case must avoid bootstrapping in the main thread, since this would mean bootstrapping a large number of ciphertexts. In comparison, the bootstrappings of the auxiliary thread are much less expensive, as they can be amortized across the various plaintexts. 

In order to achieve our goal of avoiding bootstrappings in the main thread, a careful schedule of the bootstrapping of the auxiliary thread is required, as the output of the auxiliary thread is injected back into the main thread. 

Our choice is to minimize the level consumption of the main thread:
\begin{itemize}
\item if $\ctO$ has sufficiently high multiplicative budget at Step~6 of~Algorithm~\ref{alg:AuxThread}, so as to allow the full evaluation of~Algorithm~\ref{alg:AuxThread} without bootstrapping, we bootstrap the result of~Algorithm~\ref{alg:AuxThread}; 
\item otherwise, we bootstrap $\ctO$ before or after Step 6 of Algorithm~\ref{alg:AuxThread}. If the level of $\ctO$ at the end of Algorithm~\ref{alg:AuxThread} is lower than the level of the main thread, we bootstrap $\ctO$ again. 
\end{itemize}

Note that the main thread must have one multiplicative level available when entering the auxiliary thread, in order to execute Step~1 of Algorithm~\ref{alg:AuxThread}. 

\subsubsection{Implementation of Algorithm~\ref{alg:Softmax1B}}
The purpose of this algorithm is to limit the number of bootstrappings
of the main thread. In particular, with our parameters, it is possible
to execute~Algorithm~\ref{alg:Softmax1B} on the range $[-128, 0]$
without bootstrapping the main thread. We consider
Algorithm~\ref{alg:Softmax1B} on this range only.

\subsection{Experiments}
The goal of our experiments is both to give timings for the interested
reader, but also to demonstrate that our asymptotic results are
confirmed in practice. We also showcase the performance of
Algorithm~\ref{alg:Softmax1B}.

It should be noted that our experimental setting, in particular the
input range, is derived from our tests on Meta's LLaMa (7B version)
LLM~\cite{LLAMA}. In this setting, the dimension of the Softmax calls
correspond to the number of tokens (the sum of the input  and output size).

\subsubsection{Experimental setting}
All our CPU experiments have been run on an Intel Xeon Silver 4114 CPU at
2.2GHz with 260GB of RAM running Linux. All our timings are given for
a single-thread execution. We shortly discuss GPU timings at the very
end of this section. 

\subsubsection{Input distribution}
We chose to experiment on random inputs. Regarding the distribution of
our inputs, we have chosen a normal distribution centered at $-M/2$
with standard deviation $M/6$ (tailcut so that all elements belong to
$[-M, 0]$), but found that the uniform distribution on $[-M, 0]$ gives
very similar results. Our experiments on LLaMa suggest that this
normal distribution is a suitable model for practical inputs to
Softmax in this LLM, with $M = 128$. Still, we also experimented using
the same distribution for $M = 256$ in order to demonstrate the
scalability properties of our method.

\subsubsection{Error measurements}
\def\errabs{{\mathrm{err}_{\mathrm{abs}}}}
\def\errrel{{\mathrm{err}_{\mathrm{rel}}}}
\def\Algorithm1{{\mathrm{Algorithm1}}}
As our computations use a fixed-point system, we have mostly measured the absolute error, defined by 
\[
\errabs = \| \SM({\mathbf x}) - \Algorithm1({\mathbf x}) \|_\infty,
\]
where for a vector $\mathbf{x}$, the notation $\|\mathbf{x}\|_\infty$ stands for the largest coordinate of $\mathbf{x}$ in absolute value. 
The associated absolute precision is $-\log \errabs$. This choice seemed more adapted to us due to the fact that our underlying arithmetic is fixed-point. 

The alternative would be to measure the {\em relative precision}, which we define by 
\[ \errrel = \frac{\|\SM({\mathbf x}) - \Algorithm1({\mathbf x})\|_\infty}{\|\SM({\mathbf x})\|_\infty}.
\]

We point that $\|\SM({\mathbf x})\|_\infty \ge 1/n$, which shows that the precision difference between the two measures is at most $\log n$ bits. This difference will be large when all values are very close to one another, and small when the maximal value is significantly larger -- the second case being more typical than the first one in applications. 

In our experiments, we measure the worst-case accuracy but also the average accuracy and standard deviation. The results are obtained via experiments on 5000 Softmax in each dimension. The average accuracy in bits is defined as the logarithm of the average absolute error, and similarly for standard deviation. 

\subsubsection{The one-ciphertext case}
We give in Table~\ref{tab:alg1} the accuracy and time (in single-thread CPU) of Algorithm~\ref{alg:Softmax} for computing Softmax in dimension $n$, in the range $[-256, 0]$. 

\begin{table}[H]\caption{\label{tab:alg1}Experimental results for Algorithm \ref{alg:Softmax} for various values of $n$. All inputs lie within the interval $[-256,0]$.
}
\begin{tabular}{|c|c|c|c|c|c|}
\hline
$M$                    & $n$    & time (s) & \multicolumn{3}{|c|}{precision (bits)} \\
& & & worst  case & std. dev. & average\\ 
\hline
\multirow{4}{*}{256} & 128  & 124      & -15.5 &  2.0 &-21.0            \\ \cline{2-6} 
                      & 256  & 142      & -16.6 & 1.5 & -20.5            \\ \cline{2-6} 
                     & 512  & 145     & -15.7 & 1.0 & -19.4            \\ \cline{2-6} 
                     & 1024 & 182      & -15.3 & 0.9 & -18.8          \\ \hline
\end{tabular}
\end{table}

We have also tested our algorithm for $n = 32768$ on one input as a
stress test. Another motivation is that it is the dimension of the
final Softmax call, for the token generation stage of LLaMa.  In that
case, the computing time was 254s, in line with the linear growth in
$\log n$ (we include it on the graph of Figure~\ref{fig:pictime}), and
the absolute precision was $\approx -12.8$ bits in the worst case
(-17.2 on average); however, as this absolute precision might be
meaningless in that case we measured the relative precision and found
it to be $\approx -6.5$ bits. It is not surprising that the accuracy
degrades somehow. Still, a relative precision of $6.5$ bits remain
significant enough for many applications, even though more experiments
would be needed in order to validate this accuracy.

\subsubsection{Accuracy}
Table~\ref{tab:alg1} demonstrates a very good numerical
behaviour of our algorithm. Our precision is limited by the
bootstrapping precision (22 bits); the overall loss of precision is
thus at most of the order of $6$ bits in the worst case, which is
significantly better than even our heuristic estimate of $k + 1.5\log n$ bits. 

Further, part of this precision loss occurs by construction. Indeed,
our design choices for auxiliary functions reflect our target
precision of 16 bits. In particular, the choice of a 19.5 bit precise
approximation for the final square root (see
Appendix~\ref{app:aux_functions}) limits, after squaring, the final
precision to around 18.5 bits in the worst case. 

We have also experimented on~Algorithm~\ref{alg:Softmax1B} and found
it to have similar accuracy compared to Algorithm~\ref{alg:Softmax}.

\subsubsection{Efficiency}
We demonstrate the linear dependency in $\log n$ of the one ciphertext
Algorithm~\ref{alg:Softmax} in the one ciphertext setting; the timings
provided in Table~\ref{tab:alg1}, illustrated in
Figure~\ref{fig:Meta-BTS}, support this result. Timings for $n=512$
are better than expected due to the fact that we are able to use the
same degree as for $n = 256$ for the polynomial approximation of the
auxiliary thread.

\begin{figure}[h]
  \caption{Time as a function of $\log n$.\label{fig:pictime}}
  \Description{This image plots the cost in seconds of our implementation as a function of $\log n$. It illustrates the good fit of the experimental results with theory, which shows a linear cost in $\log n$.}
    \centering
\begin{tikzpicture}
  \begin{axis}[
      width=.42\textwidth, height=.25\textheight, 
    xlabel={$\log$ of input dimension $n$},
    ylabel={time\,(s)},
    xmin=6, xmax=16,
    ymin=100, ymax=260,
    legend pos=north west,
]
\addplot[
    color=black,
    mark=square,
    ]
    coordinates {(7,124)(8,142)(9,145)(10,182)(15,254)};
    \addlegendentry{Algorithm \ref{alg:Softmax}}
\end{axis}
\end{tikzpicture}
\end{figure}

\subsubsection{The many-ciphertext case}\label{sec:5_many_ctxt}
In this paragraph, we provide results for the many-ciphertext version
of Algorithms~\ref{alg:Softmax} and~\ref{alg:Softmax1B} and HETAL.

In order to provide a fair comparison with HETAL in that setting, we
have reimplemented the latter using the ideas described in
Section~\ref{sec:parallel} to tailor it to the many-ciphertexts
case. We provide a comparison based on this implementation.

We have experimented with 1 to 64 simultaneous ciphertexts for Softmax
in dimension 256 over [-128, 0]; as each ciphertext packs~128 such
Softmax, we shall thus compute up to 8192 Softmax at a time.  We point
out that these parameter values are typical of current mainstream
Large Language Models.

Table~\ref{tab:SMmany} gives the timings for the three methods. 

\begin{table}[H]
\caption{Comparison between Algorithms~\ref{alg:Softmax},~\ref{alg:Softmax1B} and HETAL~\cite{HETAL} -- times in s., single-thread CPU.\label{tab:SMmany}}
\begin{tabular}{|c|c|c|c|c|c|c|c|}
\hline
num\_ctxt                       & 1     & 2     & 4     & 8 & 16    & 32    & 64    \\ \hline
Alg.~\ref{alg:Softmax}    & \textbf{130}   & \textbf{142}  & \textbf{169}   & 228   & 357   & 617 & 1080  \\ \hline
Alg.~\ref{alg:Softmax1B}  & 160   & 168   & 179   & \textbf{193}   & \textbf{229}   & \textbf{286} & \textbf{414}  \\ \hline
HETAL~\cite{HETAL} & 388   & 395   & 455   & 605   & 932   & 1670 & 3180  \\ \hline
Best speedup & 3.0 & 2.8 & 2.7 & 3.1 & 4.7 & 5.8 & 7.7 \\ \hline
\end{tabular}
\end{table}

For 64 ciphertexts in dimension 128, we obtain an amortized cost per
Softmax of 0.05s for Algorithm~\ref{alg:Softmax1B}, 0.13s for
Algorithm~\ref{alg:Softmax}, and of~0.39s for HETAL's approach.

Finally, we point that for Algorithm~\ref{alg:Softmax}, the cost of
the bootstrapings is $\approx 80\%$ of the total cost; for
Algorithm~\ref{alg:Softmax1B}, it decreases from $\approx 70\%$ (for 1
ciphertext) to $27\%$ (for 64 ciphertexts). This clearly points to the
main difference between the two approaches and illustrates that the
key of efficiency lies indeed in the idea of avoiding bootstrappings in 
the main thread.

\subsubsection{GPU implementation}
Finally, we have implemented and run a GPU implementation of our
results on an NVIDIA RTX-6000 GPU. We have, in that case, simply
measured the total cost of all Softmax calls in a full run of LLaMa
(7B version) with 128 tokens; in practice, this means 32 times (one
per layer) 32 calls to 128 Softmax of dimension 128 (i.e., we are in
the many-ciphertext variant with 16 ciphertexts), plus a final call to
one Softmax of dimension 32768. The total duration of all these
computations was less than 90 seconds.  Table~\ref{tab:SMmany} shows
that the single-thread CPU variant would require around 7600 seconds
for the same computation; the GPU architecture thus brings a speedup
of a factor $\approx 90$.

\section{Conclusion}
We have presented a new HE-compatible Softmax algorithm based on a
normalize-and-square strategy. Our experiments demonstrate
that it allows real-world computations with GPU timings
that come close to practical applicability. Overall, this work will
help develop applications of HE to privacy-preserving machine learning. 

\section{Acknowledgements}
We wish to thank Jung Hee Cheon for suggesting this research question to
us, and Taekyung Kim and Ernest Ryu for helpful discussions.

\bibliographystyle{ACM-Reference-Format}
\bibliography{bib}


\begin{thebibliography}{30}


\ifx \showCODEN    \undefined \def \showCODEN     #1{\unskip}     \fi
\ifx \showDOI      \undefined \def \showDOI       #1{#1}\fi
\ifx \showISBNx    \undefined \def \showISBNx     #1{\unskip}     \fi
\ifx \showISBNxiii \undefined \def \showISBNxiii  #1{\unskip}     \fi
\ifx \showISSN     \undefined \def \showISSN      #1{\unskip}     \fi
\ifx \showLCCN     \undefined \def \showLCCN      #1{\unskip}     \fi
\ifx \shownote     \undefined \def \shownote      #1{#1}          \fi
\ifx \showarticletitle \undefined \def \showarticletitle #1{#1}   \fi
\ifx \showURL      \undefined \def \showURL       {\relax}        \fi
\providecommand\bibfield[2]{#2}
\providecommand\bibinfo[2]{#2}
\providecommand\natexlab[1]{#1}
\providecommand\showeprint[2][]{arXiv:#2}

\bibitem[Badawi et~al\mbox{.}(2020)]%
        {PrivFT}
\bibfield{author}{\bibinfo{person}{Ahmad~Al Badawi}, \bibinfo{person}{Louie
  Hoang}, \bibinfo{person}{Chan~Fook Mun}, \bibinfo{person}{Kim Laine}, {and}
  \bibinfo{person}{Khin Mi~Mi Aung}.} \bibinfo{year}{2020}\natexlab{}.
\newblock \showarticletitle{PrivFT: Private and Fast Text Classification With
  Homomorphic Encryption}.
\newblock \bibinfo{journal}{\emph{{IEEE} Access}}  \bibinfo{volume}{8}
  (\bibinfo{year}{2020}), \bibinfo{pages}{226544--226556}.
\newblock
\urldef\tempurl%
\url{https://doi.org/10.1109/ACCESS.2020.3045465}
\showDOI{\tempurl}


\bibitem[Bernstein(1913)]%
        {Bern1913}
\bibfield{author}{\bibinfo{person}{Sergey Bernstein}.}
  \bibinfo{year}{1913}\natexlab{}.
\newblock \showarticletitle{Sur la meilleure approximation de |x| par des
  polynomes de degr\'es donn\'es}.
\newblock \bibinfo{journal}{\emph{Acta. Math.}}  \bibinfo{volume}{37}
  (\bibinfo{year}{1913}), \bibinfo{pages}{1--57}.
\newblock


\bibitem[Bossuat et~al\mbox{.}(2021)]%
        {BTS4}
\bibfield{author}{\bibinfo{person}{Jean{-}Philippe Bossuat},
  \bibinfo{person}{Christian Mouchet}, \bibinfo{person}{Juan~Ram{\'{o}}n
  Troncoso{-}Pastoriza}, {and} \bibinfo{person}{Jean{-}Pierre Hubaux}.}
  \bibinfo{year}{2021}\natexlab{}.
\newblock \showarticletitle{Efficient Bootstrapping for Approximate Homomorphic
  Encryption with Non-sparse Keys}. In \bibinfo{booktitle}{\emph{Advances in
  Cryptology - {EUROCRYPT} 2021 - 40th Annual International Conference on the
  Theory and Applications of Cryptographic Techniques, Zagreb, Croatia, October
  17-21, 2021, Proceedings, Part {I}}} \emph{(\bibinfo{series}{Lecture Notes in
  Computer Science}, Vol.~\bibinfo{volume}{12696})},
  \bibfield{editor}{\bibinfo{person}{Anne Canteaut} {and}
  \bibinfo{person}{Fran{\c{c}}ois{-}Xavier Standaert}} (Eds.).
  \bibinfo{publisher}{Springer}, \bibinfo{pages}{587--617}.
\newblock
\urldef\tempurl%
\url{https://doi.org/10.1007/978-3-030-77870-5\_21}
\showDOI{\tempurl}


\bibitem[Brakerski(2012)]%
        {Brakerski12}
\bibfield{author}{\bibinfo{person}{Z. Brakerski}.}
  \bibinfo{year}{2012}\natexlab{}.
\newblock \showarticletitle{Fully Homomorphic Encryption without Modulus
  Switching from Classical Gap{SVP}}. In \bibinfo{booktitle}{\emph{{CRYPTO}}}.
\newblock


\bibitem[Brakerski et~al\mbox{.}(2014)]%
        {BGV14}
\bibfield{author}{\bibinfo{person}{Z. Brakerski}, \bibinfo{person}{C. Gentry},
  {and} \bibinfo{person}{V. Vaikuntanathan}.} \bibinfo{year}{2014}\natexlab{}.
\newblock \showarticletitle{({L}eveled) Fully Homomorphic Encryption without
  Bootstrapping}.
\newblock \bibinfo{journal}{\emph{{ACM} Trans. Comput. Theory}}
  (\bibinfo{year}{2014}).
\newblock


\bibitem[Chen et~al\mbox{.}(2019)]%
        {BTS2}
\bibfield{author}{\bibinfo{person}{Hao Chen}, \bibinfo{person}{Ilaria
  Chillotti}, {and} \bibinfo{person}{Yongsoo Song}.}
  \bibinfo{year}{2019}\natexlab{}.
\newblock \showarticletitle{Improved Bootstrapping for Approximate Homomorphic
  Encryption}. In \bibinfo{booktitle}{\emph{Advances in Cryptology -
  {EUROCRYPT} 2019 - 38th Annual International Conference on the Theory and
  Applications of Cryptographic Techniques, Darmstadt, Germany, May 19-23,
  2019, Proceedings, Part {II}}} \emph{(\bibinfo{series}{Lecture Notes in
  Computer Science}, Vol.~\bibinfo{volume}{11477})},
  \bibfield{editor}{\bibinfo{person}{Yuval Ishai} {and}
  \bibinfo{person}{Vincent Rijmen}} (Eds.). \bibinfo{publisher}{Springer},
  \bibinfo{pages}{34--54}.
\newblock
\urldef\tempurl%
\url{https://doi.org/10.1007/978-3-030-17656-3\_2}
\showDOI{\tempurl}


\bibitem[Cheon et~al\mbox{.}(2018)]%
        {BTS}
\bibfield{author}{\bibinfo{person}{Jung~Hee Cheon}, \bibinfo{person}{Kyoohyung
  Han}, \bibinfo{person}{Andrey Kim}, \bibinfo{person}{Miran Kim}, {and}
  \bibinfo{person}{Yongsoo Song}.} \bibinfo{year}{2018}\natexlab{}.
\newblock \showarticletitle{Bootstrapping for Approximate Homomorphic
  Encryption}. In \bibinfo{booktitle}{\emph{Advances in Cryptology -
  {EUROCRYPT} 2018 - 37th Annual International Conference on the Theory and
  Applications of Cryptographic Techniques, Tel Aviv, Israel, April 29 - May 3,
  2018 Proceedings, Part {I}}} \emph{(\bibinfo{series}{Lecture Notes in
  Computer Science}, Vol.~\bibinfo{volume}{10820})},
  \bibfield{editor}{\bibinfo{person}{Jesper~Buus Nielsen} {and}
  \bibinfo{person}{Vincent Rijmen}} (Eds.). \bibinfo{publisher}{Springer},
  \bibinfo{pages}{360--384}.
\newblock
\urldef\tempurl%
\url{https://doi.org/10.1007/978-3-319-78381-9\_14}
\showDOI{\tempurl}


\bibitem[Cheon et~al\mbox{.}(2017)]%
        {CKKS}
\bibfield{author}{\bibinfo{person}{Jung~Hee Cheon}, \bibinfo{person}{Andrey
  Kim}, \bibinfo{person}{Miran Kim}, {and} \bibinfo{person}{Yong~Soo Song}.}
  \bibinfo{year}{2017}\natexlab{}.
\newblock \showarticletitle{Homomorphic Encryption for Arithmetic of
  Approximate Numbers}. In \bibinfo{booktitle}{\emph{Advances in Cryptology -
  {ASIACRYPT} 2017 - 23rd International Conference on the Theory and
  Applications of Cryptology and Information Security, Hong Kong, China,
  December 3-7, 2017, Proceedings, Part {I}}} \emph{(\bibinfo{series}{Lecture
  Notes in Computer Science}, Vol.~\bibinfo{volume}{10624})},
  \bibfield{editor}{\bibinfo{person}{Tsuyoshi Takagi} {and}
  \bibinfo{person}{Thomas Peyrin}} (Eds.). \bibinfo{publisher}{Springer},
  \bibinfo{pages}{409--437}.
\newblock
\urldef\tempurl%
\url{https://doi.org/10.1007/978-3-319-70694-8\_15}
\showDOI{\tempurl}


\bibitem[Cheon et~al\mbox{.}(2020)]%
        {Cheon0K20}
\bibfield{author}{\bibinfo{person}{Jung~Hee Cheon}, \bibinfo{person}{Dongwoo
  Kim}, {and} \bibinfo{person}{Duhyeong Kim}.} \bibinfo{year}{2020}\natexlab{}.
\newblock \showarticletitle{Efficient Homomorphic Comparison Methods with
  Optimal Complexity}. In \bibinfo{booktitle}{\emph{Advances in Cryptology -
  {ASIACRYPT} 2020 - 26th International Conference on the Theory and
  Application of Cryptology and Information Security, Daejeon, South Korea,
  December 7-11, 2020, Proceedings, Part {II}}} \emph{(\bibinfo{series}{Lecture
  Notes in Computer Science}, Vol.~\bibinfo{volume}{12492})},
  \bibfield{editor}{\bibinfo{person}{Shiho Moriai} {and}
  \bibinfo{person}{Huaxiong Wang}} (Eds.). \bibinfo{publisher}{Springer},
  \bibinfo{pages}{221--256}.
\newblock
\urldef\tempurl%
\url{https://doi.org/10.1007/978-3-030-64834-3\_8}
\showDOI{\tempurl}


\bibitem[Cheon et~al\mbox{.}(2022)]%
        {CheonetAl2022}
\bibfield{author}{\bibinfo{person}{Jung~Hee Cheon}, \bibinfo{person}{Wootae
  Kim}, {and} \bibinfo{person}{Jai~Hyun Park}.}
  \bibinfo{year}{2022}\natexlab{}.
\newblock \showarticletitle{Efficient Homomorphic Evaluation on Large
  Intervals}.
\newblock \bibinfo{journal}{\emph{{IEEE} Trans. Inf. Forensics Secur.}}
  \bibinfo{volume}{17} (\bibinfo{year}{2022}), \bibinfo{pages}{2553--2568}.
\newblock
\urldef\tempurl%
\url{https://doi.org/10.1109/TIFS.2022.3188145}
\showDOI{\tempurl}


\bibitem[Chevillard et~al\mbox{.}(2010)]%
        {Sollya}
\bibfield{author}{\bibinfo{person}{S. Chevillard}, \bibinfo{person}{M.
  Jolde\c{s}}, {and} \bibinfo{person}{C. Lauter}.}
  \bibinfo{year}{2010}\natexlab{}.
\newblock \showarticletitle{Sollya: An Environment for the Development of
  Numerical Codes}. In \bibinfo{booktitle}{\emph{Mathematical Software - ICMS
  2010}} \emph{(\bibinfo{series}{Lecture Notes in Computer Science},
  Vol.~\bibinfo{volume}{6327})}, \bibfield{editor}{\bibinfo{person}{K.~Fukuda},
  \bibinfo{person}{J.~van~der Hoeven}, \bibinfo{person}{M.~Joswig}, {and}
  \bibinfo{person}{N.~Takayama}} (Eds.). \bibinfo{publisher}{Springer},
  \bibinfo{address}{Heidelberg, Germany}, \bibinfo{pages}{28--31}.
\newblock


\bibitem[Chillotti et~al\mbox{.}(2020)]%
        {CGGI}
\bibfield{author}{\bibinfo{person}{Ilaria Chillotti}, \bibinfo{person}{Nicolas
  Gama}, \bibinfo{person}{Mariya Georgieva}, {and} \bibinfo{person}{Malika
  Izabach{\`{e}}ne}.} \bibinfo{year}{2020}\natexlab{}.
\newblock \showarticletitle{{TFHE:} Fast Fully Homomorphic Encryption Over the
  Torus}.
\newblock \bibinfo{journal}{\emph{J. Cryptol.}} \bibinfo{volume}{33},
  \bibinfo{number}{1} (\bibinfo{year}{2020}), \bibinfo{pages}{34--91}.
\newblock
\urldef\tempurl%
\url{https://doi.org/10.1007/S00145-019-09319-X}
\showDOI{\tempurl}


\bibitem[Fan and Vercauteren(2012)]%
        {FV12}
\bibfield{author}{\bibinfo{person}{J. Fan} {and} \bibinfo{person}{F.
  Vercauteren}.} \bibinfo{year}{2012}\natexlab{}.
\newblock \bibinfo{title}{Somewhat Practical Fully Homomorphic Encryption}.
\newblock
\newblock
\shownote{Available at \url{http://eprint.iacr.org/2012/144}}.


\bibitem[Gentry(2009)]%
        {Gen09}
\bibfield{author}{\bibinfo{person}{Craig Gentry}.}
  \bibinfo{year}{2009}\natexlab{}.
\newblock \showarticletitle{Fully homomorphic encryption using ideal lattices}.
  In \bibinfo{booktitle}{\emph{Proceedings of the Forty-First Annual ACM
  Symposium on Theory of Computing}} (Bethesda, MD, USA)
  \emph{(\bibinfo{series}{STOC '09})}. \bibinfo{publisher}{Association for
  Computing Machinery}, \bibinfo{address}{New York, NY, USA},
  \bibinfo{pages}{169–178}.
\newblock
\showISBNx{9781605585062}
\urldef\tempurl%
\url{https://doi.org/10.1145/1536414.1536440}
\showDOI{\tempurl}


\bibitem[Han and Ki(2020)]%
        {BTS3}
\bibfield{author}{\bibinfo{person}{Kyoohyung Han} {and}
  \bibinfo{person}{Dohyeong Ki}.} \bibinfo{year}{2020}\natexlab{}.
\newblock \showarticletitle{Better Bootstrapping for Approximate Homomorphic
  Encryption}. In \bibinfo{booktitle}{\emph{Topics in Cryptology - {CT-RSA}
  2020 - The Cryptographers' Track at the {RSA} Conference 2020, San Francisco,
  CA, USA, February 24-28, 2020, Proceedings}} \emph{(\bibinfo{series}{Lecture
  Notes in Computer Science}, Vol.~\bibinfo{volume}{12006})},
  \bibfield{editor}{\bibinfo{person}{Stanislaw Jarecki}} (Ed.).
  \bibinfo{publisher}{Springer}, \bibinfo{pages}{364--390}.
\newblock
\urldef\tempurl%
\url{https://doi.org/10.1007/978-3-030-40186-3\_16}
\showDOI{\tempurl}


\bibitem[Hong et~al\mbox{.}(2022)]%
        {BMC}
\bibfield{author}{\bibinfo{person}{Seungwan Hong}, \bibinfo{person}{Jai Park},
  \bibinfo{person}{Wonhee Cho}, \bibinfo{person}{Hyeongmin Choe}, {and}
  \bibinfo{person}{Jung Cheon}.} \bibinfo{year}{2022}\natexlab{}.
\newblock \showarticletitle{Secure tumor classification by shallow neural
  network using homomorphic encryption}.
\newblock \bibinfo{journal}{\emph{BMC Genomics}}  \bibinfo{volume}{23}
  (\bibinfo{date}{04} \bibinfo{year}{2022}).
\newblock
\urldef\tempurl%
\url{https://doi.org/10.1186/s12864-022-08469-w}
\showDOI{\tempurl}


\bibitem[Jiang et~al\mbox{.}(2018)]%
        {JKLN}
\bibfield{author}{\bibinfo{person}{Xiaoqian Jiang}, \bibinfo{person}{Miran
  Kim}, \bibinfo{person}{Kristin~E. Lauter}, {and} \bibinfo{person}{Yongsoo
  Song}.} \bibinfo{year}{2018}\natexlab{}.
\newblock \showarticletitle{Secure Outsourced Matrix Computation and
  Application to Neural Networks}. In \bibinfo{booktitle}{\emph{Proceedings of
  the 2018 {ACM} {SIGSAC} Conference on Computer and Communications Security,
  {CCS} 2018, Toronto, ON, Canada, October 15-19, 2018}},
  \bibfield{editor}{\bibinfo{person}{David Lie}, \bibinfo{person}{Mohammad
  Mannan}, \bibinfo{person}{Michael Backes}, {and} \bibinfo{person}{XiaoFeng
  Wang}} (Eds.). \bibinfo{publisher}{{ACM}}, \bibinfo{pages}{1209--1222}.
\newblock
\urldef\tempurl%
\url{https://doi.org/10.1145/3243734.3243837}
\showDOI{\tempurl}


\bibitem[Jin et~al\mbox{.}(2020)]%
        {Softmax2}
\bibfield{author}{\bibinfo{person}{Chao Jin}, \bibinfo{person}{Mohamed Ragab},
  {and} \bibinfo{person}{Khin Mi~Mi Aung}.} \bibinfo{year}{2020}\natexlab{}.
\newblock \showarticletitle{Secure Transfer Learning for Machine Fault
  Diagnosis Under Different Operating Conditions}. In
  \bibinfo{booktitle}{\emph{Provable and Practical Security: 14th International
  Conference, ProvSec 2020, Singapore, November 29 – December 1, 2020,
  Proceedings}} (Singapore, Singapore). \bibinfo{publisher}{Springer-Verlag},
  \bibinfo{address}{Berlin, Heidelberg}, \bibinfo{pages}{278–297}.
\newblock
\showISBNx{978-3-030-62575-7}
\urldef\tempurl%
\url{https://doi.org/10.1007/978-3-030-62576-4_14}
\showDOI{\tempurl}


\bibitem[Ju et~al\mbox{.}(2023)]%
        {neujeans}
\bibfield{author}{\bibinfo{person}{Jae~Hyung Ju}, \bibinfo{person}{Jaiyoung
  Park}, \bibinfo{person}{Jongmin Kim}, \bibinfo{person}{Donghwan Kim}, {and}
  \bibinfo{person}{Jung~Ho Ahn}.} \bibinfo{year}{2023}\natexlab{}.
\newblock \showarticletitle{NeuJeans: Private Neural Network Inference with
  Joint Optimization of Convolution and Bootstrapping}.
\newblock \bibinfo{journal}{\emph{CoRR}}  \bibinfo{volume}{abs/2312.04356}
  (\bibinfo{year}{2023}).
\newblock
\urldef\tempurl%
\url{https://doi.org/10.48550/ARXIV.2312.04356}
\showDOI{\tempurl}
\showeprint[arXiv]{2312.04356}


\bibitem[Juvekar et~al\mbox{.}(2018)]%
        {GAZELLE}
\bibfield{author}{\bibinfo{person}{Chiraag Juvekar}, \bibinfo{person}{Vinod
  Vaikuntanathan}, {and} \bibinfo{person}{Anantha Chandrakasan}.}
  \bibinfo{year}{2018}\natexlab{}.
\newblock \showarticletitle{{GAZELLE}: A Low Latency Framework for Secure
  Neural Network Inference}. In \bibinfo{booktitle}{\emph{27th USENIX Security
  Symposium (USENIX Security 18)}}. \bibinfo{publisher}{USENIX Association},
  \bibinfo{address}{Baltimore, MD}, \bibinfo{pages}{1651--1669}.
\newblock
\showISBNx{978-1-939133-04-5}
\urldef\tempurl%
\url{https://www.usenix.org/conference/usenixsecurity18/presentation/juvekar}
\showURL{%
\tempurl}


\bibitem[Lee et~al\mbox{.}(2023a)]%
        {LeeEtAl2023}
\bibfield{author}{\bibinfo{person}{Junghyun Lee}, \bibinfo{person}{Eunsang
  Lee}, \bibinfo{person}{Young{-}Sik Kim}, \bibinfo{person}{Yongwoo Lee},
  \bibinfo{person}{Joon{-}Woo Lee}, \bibinfo{person}{Yongjune Kim}, {and}
  \bibinfo{person}{Jong{-}Seon No}.} \bibinfo{year}{2023}\natexlab{a}.
\newblock \showarticletitle{Optimizing Layerwise Polynomial Approximation for
  Efficient Private Inference on Fully Homomorphic Encryption: {A} Dynamic
  Programming Approach}.
\newblock \bibinfo{journal}{\emph{CoRR}}  \bibinfo{volume}{abs/2310.10349}
  (\bibinfo{year}{2023}).
\newblock
\urldef\tempurl%
\url{https://doi.org/10.48550/ARXIV.2310.10349}
\showDOI{\tempurl}
\showeprint[arXiv]{2310.10349}


\bibitem[Lee et~al\mbox{.}(2022a)]%
        {Softmax1}
\bibfield{author}{\bibinfo{person}{Joon-Woo Lee}, \bibinfo{person}{Hyungchul
  Kang}, \bibinfo{person}{Yongwoo Lee}, \bibinfo{person}{Woosuk Choi},
  \bibinfo{person}{Jieun Eom}, \bibinfo{person}{Maxim Deryabin},
  \bibinfo{person}{Eunsang Lee}, \bibinfo{person}{Junghyun Lee},
  \bibinfo{person}{Donghoon Yoo}, \bibinfo{person}{Young-Sik Kim}, {and}
  \bibinfo{person}{Jong-Seon No}.} \bibinfo{year}{2022}\natexlab{a}.
\newblock \showarticletitle{Privacy-Preserving Machine Learning With Fully
  Homomorphic Encryption for Deep Neural Network}.
\newblock \bibinfo{journal}{\emph{IEEE Access}}  \bibinfo{volume}{10}
  (\bibinfo{year}{2022}), \bibinfo{pages}{30039--30054}.
\newblock
\urldef\tempurl%
\url{https://doi.org/10.1109/ACCESS.2022.3159694}
\showDOI{\tempurl}


\bibitem[Lee et~al\mbox{.}(2023b)]%
        {HETAL}
\bibfield{author}{\bibinfo{person}{Seewoo Lee}, \bibinfo{person}{Garam Lee},
  \bibinfo{person}{Jung~Woo Kim}, \bibinfo{person}{Junbum Shin}, {and}
  \bibinfo{person}{Mun-Kyu Lee}.} \bibinfo{year}{2023}\natexlab{b}.
\newblock \showarticletitle{HETAL: Efficient Privacy-Preserving Transfer
  Learning with Homomorphic Encryption}. In
  \bibinfo{booktitle}{\emph{Proceedings of the 40th International Conference on
  Machine Learning}} (Honolulu, Hawaii, USA)
  \emph{(\bibinfo{series}{ICML'23})}. \bibinfo{publisher}{JMLR.org}, Article
  \bibinfo{articleno}{786}, \bibinfo{numpages}{26}~pages.
\newblock


\bibitem[Lee et~al\mbox{.}(2022b)]%
        {BTS5}
\bibfield{author}{\bibinfo{person}{Yongwoo Lee}, \bibinfo{person}{Joon{-}Woo
  Lee}, \bibinfo{person}{Young{-}Sik Kim}, \bibinfo{person}{Yongjune Kim},
  \bibinfo{person}{Jong{-}Seon No}, {and} \bibinfo{person}{HyungChul Kang}.}
  \bibinfo{year}{2022}\natexlab{b}.
\newblock \showarticletitle{High-Precision Bootstrapping for Approximate
  Homomorphic Encryption by Error Variance Minimization}. In
  \bibinfo{booktitle}{\emph{Advances in Cryptology - {EUROCRYPT} 2022 - 41st
  Annual International Conference on the Theory and Applications of
  Cryptographic Techniques, Trondheim, Norway, May 30 - June 3, 2022,
  Proceedings, Part {I}}} \emph{(\bibinfo{series}{Lecture Notes in Computer
  Science}, Vol.~\bibinfo{volume}{13275})},
  \bibfield{editor}{\bibinfo{person}{Orr Dunkelman} {and}
  \bibinfo{person}{Stefan Dziembowski}} (Eds.). \bibinfo{publisher}{Springer},
  \bibinfo{pages}{551--580}.
\newblock
\urldef\tempurl%
\url{https://doi.org/10.1007/978-3-031-06944-4\_19}
\showDOI{\tempurl}


\bibitem[Park et~al\mbox{.}(2022)]%
        {AESPA}
\bibfield{author}{\bibinfo{person}{Jaiyoung Park},
  \bibinfo{person}{Michael~Jaemin Kim}, \bibinfo{person}{Wonkyung Jung}, {and}
  \bibinfo{person}{Jung~Ho Ahn}.} \bibinfo{year}{2022}\natexlab{}.
\newblock \showarticletitle{{AESPA:} Accuracy Preserving Low-degree Polynomial
  Activation for Fast Private Inference}.
\newblock \bibinfo{journal}{\emph{CoRR}}  \bibinfo{volume}{abs/2201.06699}
  (\bibinfo{year}{2022}).
\newblock
\showeprint[arXiv]{2201.06699}
\urldef\tempurl%
\url{https://arxiv.org/abs/2201.06699}
\showURL{%
\tempurl}


\bibitem[Paterson and Stockmeyer(1973)]%
        {PS73}
\bibfield{author}{\bibinfo{person}{Mike Paterson} {and}
  \bibinfo{person}{Larry~J. Stockmeyer}.} \bibinfo{year}{1973}\natexlab{}.
\newblock \showarticletitle{On the Number of Nonscalar Multiplications
  Necessary to Evaluate Polynomials}.
\newblock \bibinfo{journal}{\emph{{SIAM} J. Comput.}} \bibinfo{volume}{2},
  \bibinfo{number}{1} (\bibinfo{year}{1973}), \bibinfo{pages}{60--66}.
\newblock
\urldef\tempurl%
\url{https://doi.org/10.1137/0202007}
\showDOI{\tempurl}


\bibitem[Tong et~al\mbox{.}(2024)]%
        {polapprox}
\bibfield{author}{\bibinfo{person}{Jianming Tong}, \bibinfo{person}{Jingtian
  Dang}, \bibinfo{person}{Anupam Golder}, \bibinfo{person}{Callie Hao},
  \bibinfo{person}{Arijit Raychowdhury}, {and} \bibinfo{person}{Tushar
  Krishna}.} \bibinfo{year}{2024}\natexlab{}.
\newblock \showarticletitle{Accurate Low-Degree Polynomial Approximation of
  Non-polynomial Operators for Fast Private Inference in Homomorphic
  Encryption}.
\newblock \bibinfo{journal}{\emph{CoRR}}  \bibinfo{volume}{abs/2404.03216}
  (\bibinfo{year}{2024}).
\newblock
\urldef\tempurl%
\url{https://doi.org/10.48550/ARXIV.2404.03216}
\showDOI{\tempurl}
\showeprint[arXiv]{2404.03216}


\bibitem[Touvron et~al\mbox{.}(2023)]%
        {LLAMA}
\bibfield{author}{\bibinfo{person}{Hugo Touvron}, \bibinfo{person}{Louis
  Martin}, \bibinfo{person}{Kevin Stone}, \bibinfo{person}{Peter Albert},
  \bibinfo{person}{Amjad Almahairi}, \bibinfo{person}{Yasmine Babaei},
  \bibinfo{person}{Nikolay Bashlykov}, \bibinfo{person}{Soumya Batra},
  \bibinfo{person}{Prajjwal Bhargava}, \bibinfo{person}{Shruti Bhosale},
  \bibinfo{person}{Dan Bikel}, \bibinfo{person}{Lukas Blecher},
  \bibinfo{person}{Cristian~Canton Ferrer}, \bibinfo{person}{Moya Chen},
  \bibinfo{person}{Guillem Cucurull}, \bibinfo{person}{David Esiobu},
  \bibinfo{person}{Jude Fernandes}, \bibinfo{person}{Jeremy Fu},
  \bibinfo{person}{Wenyin Fu}, \bibinfo{person}{Brian Fuller},
  \bibinfo{person}{Cynthia Gao}, \bibinfo{person}{Vedanuj Goswami},
  \bibinfo{person}{Naman Goyal}, \bibinfo{person}{Anthony Hartshorn},
  \bibinfo{person}{Saghar Hosseini}, \bibinfo{person}{Rui Hou},
  \bibinfo{person}{Hakan Inan}, \bibinfo{person}{Marcin Kardas},
  \bibinfo{person}{Viktor Kerkez}, \bibinfo{person}{Madian Khabsa},
  \bibinfo{person}{Isabel Kloumann}, \bibinfo{person}{Artem Korenev},
  \bibinfo{person}{Punit~Singh Koura}, \bibinfo{person}{Marie-Anne Lachaux},
  \bibinfo{person}{Thibaut Lavril}, \bibinfo{person}{Jenya Lee},
  \bibinfo{person}{Diana Liskovich}, \bibinfo{person}{Yinghai Lu},
  \bibinfo{person}{Yuning Mao}, \bibinfo{person}{Xavier Martinet},
  \bibinfo{person}{Todor Mihaylov}, \bibinfo{person}{Pushkar Mishra},
  \bibinfo{person}{Igor Molybog}, \bibinfo{person}{Yixin Nie},
  \bibinfo{person}{Andrew Poulton}, \bibinfo{person}{Jeremy Reizenstein},
  \bibinfo{person}{Rashi Rungta}, \bibinfo{person}{Kalyan Saladi},
  \bibinfo{person}{Alan Schelten}, \bibinfo{person}{Ruan Silva},
  \bibinfo{person}{Eric~Michael Smith}, \bibinfo{person}{Ranjan Subramanian},
  \bibinfo{person}{Xiaoqing~Ellen Tan}, \bibinfo{person}{Binh Tang},
  \bibinfo{person}{Ross Taylor}, \bibinfo{person}{Adina Williams},
  \bibinfo{person}{Jian~Xiang Kuan}, \bibinfo{person}{Puxin Xu},
  \bibinfo{person}{Zheng Yan}, \bibinfo{person}{Iliyan Zarov},
  \bibinfo{person}{Yuchen Zhang}, \bibinfo{person}{Angela Fan},
  \bibinfo{person}{Melanie Kambadur}, \bibinfo{person}{Sharan Narang},
  \bibinfo{person}{Aurelien Rodriguez}, \bibinfo{person}{Robert Stojnic},
  \bibinfo{person}{Sergey Edunov}, {and} \bibinfo{person}{Thomas Scialom}.}
  \bibinfo{year}{2023}\natexlab{}.
\newblock \bibinfo{title}{Llama 2: Open Foundation and Fine-Tuned Chat Models}.
\newblock
\showeprint[arxiv]{2307.09288}~[cs.CL]


\bibitem[Xu et~al\mbox{.}(2021)]%
        {xu2021privacypreserving}
\bibfield{author}{\bibinfo{person}{Runhua Xu}, \bibinfo{person}{Nathalie
  Baracaldo}, {and} \bibinfo{person}{James Joshi}.}
  \bibinfo{year}{2021}\natexlab{}.
\newblock \bibinfo{title}{Privacy-Preserving Machine Learning: Methods,
  Challenges and Directions}.
\newblock
\showeprint[arxiv]{2108.04417}~[cs.LG]


\bibitem[Zimerman et~al\mbox{.}(2023)]%
        {ZimermanEtAl2023}
\bibfield{author}{\bibinfo{person}{Itamar Zimerman}, \bibinfo{person}{Moran
  Baruch}, \bibinfo{person}{Nir Drucker}, \bibinfo{person}{Gilad Ezov},
  \bibinfo{person}{Omri Soceanu}, {and} \bibinfo{person}{Lior Wolf}.}
  \bibinfo{year}{2023}\natexlab{}.
\newblock \showarticletitle{Converting Transformers to Polynomial Form for
  Secure Inference Over Homomorphic Encryption}.
\newblock \bibinfo{journal}{\emph{CoRR}}  \bibinfo{volume}{abs/2311.08610}
  (\bibinfo{year}{2023}).
\newblock
\urldef\tempurl%
\url{https://doi.org/10.48550/ARXIV.2311.08610}
\showDOI{\tempurl}
\showeprint[arXiv]{2311.08610}


\end{thebibliography}

\appendix
\section{Non-linear function evaluation}\label{app:aux_functions}
We recall that we have chosen to focus on mathematical error due to
approximation and to neglect the error coming from polynomial
evaluation.

\subsection{Computing exponentials}
\begin{lemma}\label{le:exp}
When $A, p \rightarrow \infty$, there exists a polynomial $P$ such that
$\max_{x\in [-A, 0]} |P(x)-\exp(x)|\le 2^{-p}$ and $P(x)$ can be evaluated
using  
    \[ 
    \max (\log p, \log A) + O_{p,A}(1)
    \] levels and $O(\max( \sqrt{p}, \sqrt{A})$ ciphertext-ciphertext
    multiplications.
\end{lemma}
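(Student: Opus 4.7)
\smallskip

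The plan is to combine a standard polynomial approximation result for $\exp$ on a negative half-interval with the Paterson--Stockmeyer evaluation scheme that the paper already cites.

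First, I would exhibit a polynomial $P$ of degree $d=O(A+p)$ with $\max_{x\in [-A,0]}|P(x)-\exp(x)|\le 2^{-p}$. The most direct route is the truncated Taylor series at the origin: for $x\in [-A,0]$ one has $\exp(\xi)\le 1$ for any $\xi$ between $0$ and $x$, so the Lagrange remainder satisfies $|T_d(x)-\exp(x)|\le A^{d+1}/(d+1)!$. Using Stirling to bound $(d+1)!\ge (d/e)^d$, this remainder is at most $A\cdot (Ae/d)^d$, and a short calculation shows that one can take $d=c_1 A+c_2 p$ for absolute constants $c_1,c_2$: the linear-in-$A$ contribution takes us out of the ``divergent'' regime where $d\le Ae$, while the linear-in-$p$ contribution provides the required $2^{-p}$ decay. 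Alternatively, one can work on $[-1,1]$ with the Chebyshev expansion of $\exp(\alpha x)$ for $\alpha=A/2$, whose coefficients are essentially $I_n(\alpha)\sim (\alpha/2)^n/n!$; the same kind of asymptotic gives again $d=O(A+p)$. Either way, the degree is $d=O(A+p)$.

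Second, I would invoke the Paterson--Stockmeyer algorithm \cite{PS73} already recalled in Subsection~\ref{ssec:polapprox}: any degree-$d$ polynomial can be evaluated with $O(\sqrt{d})$ ciphertext-ciphertext multiplications and $\lceil \log(d+1)\rceil$ multiplicative levels. Substituting $d=O(A+p)$ gives $O(\sqrt{A+p})=O(\max(\sqrt{A},\sqrt{p}))$ multiplications, matching the statement, and $\lceil\log(d+1)\rceil\le \log(A+p)+O(1)=\max(\log A,\log p)+O(1)$ levels, again matching the statement with an explicit $O_{p,A}(1)$ additive term.

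The main obstacle, and the only part that is not purely mechanical, is tightening the degree estimate. One has to be a little careful in the regime where $A$ is not negligible compared to $p$: the Taylor remainder $A^{d+1}/(d+1)!$ starts decaying only once $d$ exceeds roughly $eA$, so if one wanted a sharper statement (e.g.~degree $O(A)+O(p/\log p)$, which one would get via Chebyshev), the argument becomes slightly more delicate. However, for the statement as given, the weaker bound $d=O(A+p)$ is amply sufficient, and this is the version I would write up. The rest of the proof is a direct substitution into the two facts recalled in Subsection~\ref{ssec:polapprox}.
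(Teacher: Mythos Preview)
Your proposal is correct and follows essentially the same approach as the paper: bound the Taylor remainder $A^{d+1}/(d+1)!$ via Stirling, choose a degree linear in $A$ and $p$ (the paper takes $d=\max(p,e^2A)$, you take $d=c_1A+c_2p$, which are equivalent up to constants), and then invoke the standard $\lceil\log(d+1)\rceil$-level, $O(\sqrt d)$-multiplication evaluation recalled in Subsection~\ref{ssec:polapprox}. Your discussion of the ``divergent regime'' $d\le eA$ is exactly the point the paper handles by taking $d\ge e^2A$.
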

\begin{proof}
We use the degree $d$ Taylor expansion at 0, whose error (as an
alternating series) is $\le A^{d+1}/(d+1)! = 2^{-d\ln (d/(eA)) +
  o(d)}$. For $d = \max(p, e^2A)$, this is asymptotically less than
$2^{-p}$.

It then suffices to evaluate a degree $d$ polynomial, which can be performed
using 
$\log d + O(1)$ levels and $O(\sqrt{d})$ ciphertext-ciphertext multiplications.
\end{proof}

\subsection{Computing inverse square roots}
\begin{lemma}\label{le:invsqrt}
For $A, p\rightarrow \infty$ and any fixed constant $\alpha$, there
exists a polynomial $Q$ such that $|Q(x) - 1/\sqrt{x}| \le 2^{-p}$ for
$x\in [1/(\alpha A), \alpha]$ using $(\log A) (1 + o(1)) + O(\log p)$
levels and $O(\sqrt{A\log A} + \log p)$ ciphertext-ciphertext
multiplication.
\end{lemma}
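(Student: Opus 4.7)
The plan is to build $Q$ in two stages: a moderate-degree Chebyshev-based rough approximant, followed by Newton-iteration refinement for the inverse square root. By an affine rescaling, one reduces to approximating $1/\sqrt{u}$ on $[\epsilon,1]$ with $\epsilon = 1/(\alpha^2 A)$, the prefactor $1/\sqrt{\alpha}$ being absorbed afterwards. A routine Bernstein-ellipse computation shows that $1/\sqrt{u}$, whose only singularity is at $u=0$, is analytic inside the Bernstein ellipse $E_\rho$ with $\rho = 1 + 2\sqrt{\epsilon}(1+o(1)) = 1 + \Theta(A^{-1/2})$. The classical Bernstein bound then gives, for the degree-$d_0$ truncated Chebyshev expansion $P_0$,
\[
\|1/\sqrt{\,\cdot\,} - P_0\|_{\infty,[\epsilon,1]}\ \le\ C\sup_{E_\rho}\bigl|1/\sqrt{\,\cdot\,}\bigr|\,\rho^{-d_0} \ =\ O\bigl(\sqrt{A}\bigr)\,\rho^{-d_0}.
\]
Choosing $d_0 = c\sqrt{A}\log A$ for a sufficiently large absolute constant $c$ makes this absolute error at most $1/(2\sqrt{\alpha})$, which forces the weighted error $|P_0(x)\sqrt{x}-1|$ to be at most $1/2$ uniformly on the original interval. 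By Paterson--Stockmeyer, evaluating $P_0$ consumes $\lceil\log(d_0+1)\rceil = \tfrac12(\log A)(1+o(1))$ levels and $O(\sqrt{d_0}) = O(A^{1/4}\sqrt{\log A})$ ciphertext-ciphertext multiplications.

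Starting from $y_0 := P_0(x)$, I would then apply Newton's iteration $y_{n+1} = y_n(3 - xy_n^2)/2$ for $1/\sqrt{x}$. Writing $\delta_n := y_n\sqrt{x}-1$, a short algebraic check gives $\delta_{n+1} = -\delta_n^2(3/2 + \delta_n/2)$, so that $|\delta_n|\le 1/2$ implies $|\delta_{n+1}|\le|\delta_n|^2$; the iteration converges quadratically from $y_0$. After $N = O(\log(p+\log A)) = O(\log p)$ steps one obtains $|\delta_N|\le 2^{-p}/\sqrt{\alpha A}$, i.e.\ $|y_N - 1/\sqrt{x}|\le 2^{-p}$ everywhere on the interval. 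Each Newton step computes successively $y_n^2$, $x y_n^2$ and $y_n(3-x y_n^2)$, which consumes three multiplicative levels and three multiplications; the final polynomial $Q(x)$ is obtained by fully unrolling these substitutions into $P_0$.

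Summing the two stages yields a level consumption of $\tfrac12(\log A)(1+o(1)) + O(\log p)$ and a multiplication count of $O(A^{1/4}\sqrt{\log A}+\log p)$, both comfortably within the claimed bounds. The main subtlety — and the reason I track the weighted error $|P_0(x)\sqrt{x}-1|$ rather than the absolute error — is that $1/\sqrt{x}$ grows as $\Theta(\sqrt{A})$ near the left endpoint of the interval, so a merely constant absolute-error bound would leave the relative error too large for Newton to converge; weighting by $\sqrt{x}$ removes this discrepancy and exactly produces the condition $|\delta_0|<1$ needed for the Newton iteration to enter its quadratic regime.
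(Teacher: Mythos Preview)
Your proof is correct and in fact establishes a \emph{stronger} bound than the lemma states. The paper's argument differs from yours in the first stage: instead of a Chebyshev/Bernstein-ellipse estimate, it uses the Taylor series of $1/\sqrt{x}$ at $x=1$, noting that the coefficients are bounded by~$1$ in absolute value so that truncation at degree $k$ over $[1/A,1]$ leaves an error $O(A(1-1/A)^k)$; degree $k=O(A\log A)$ then suffices to enter Newton's basin. That gives exactly $(\log A)(1+o(1))$ levels and $O(\sqrt{A\log A})$ multiplications, matching the lemma tightly. Your Bernstein-ellipse route yields initial degree $O(\sqrt{A}\log A)$, hence only $\tfrac12(\log A)(1+o(1))$ levels and $O(A^{1/4}\sqrt{\log A})$ multiplications --- precisely the sharpening the paper states \emph{after} its proof as a heuristic remark about replacing Taylor expansions by minimax approximants. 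You have essentially made that remark rigorous.

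Two small points. First, your claim that $|\delta_n|\le 1/2$ implies $|\delta_{n+1}|\le|\delta_n|^2$ is off by a constant: from $\delta_{n+1}=-\delta_n^2(3+\delta_n)/2$ one gets $|\delta_{n+1}|\le(7/4)|\delta_n|^2$, which still yields $N=O(\log(p+\log A))$ steps. Second, the paper schedules one Newton step as $z_1=(x/2)\cdot y$, $z_2=y\cdot y$, $z_3=(3/2)\cdot y$, returning $z_3-z_1z_2$, which consumes $2$ levels rather than the $3$ you quote; this does not affect the $O(\log p)$ asymptotic but is the reason the paper can claim $2$ levels per iteration.
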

\begin{proof}
    We consider the case of the interval $[1/A, 1]$, the general case being
    similar. 
    
We start by computing a truncated Taylor series of $1/\sqrt{x}$ at
$1$. As the coefficients of this series are $\le 1$ in absolute value,
the error of truncating the series at degree $k$ over $[1/A,1]$ is
$O(A(1-1/A)^k)$. For $k = O(A\log A)$ this can be made smaller than
any constant; e.g., we compute $y_0$ such that $|y_0 \sqrt{x} - 1| <
1/2$.

We then use Newton's inverse square-root iteration under the form $y_{n} = y_{n-1}n(3 - x \, y_{n-1}^2)/2$. This iteration can be evaluated using 2 levels at each step (once $x/2$ is computed) and $4$ multiplcations using the following algorithm:
\begin{algorithm}
$z_1\gets x/2 \cdot y$ \Comment*[r]{$l(z_1)= \max(l(x/2), l(y))+1$ }
$z_2 \gets y \cdot y$ \Comment*[r]{$l(z_2)= l(y)+1$ }
$z_3 \gets 3/2 \cdot y$ \Comment*[r]{$l(z_3)= l(y)+1$} 
\Return($z_3 - z_1 z_2$) \Comment*[r]{level $\max(l(x/2), l(y))+2$}
\end{algorithm}

We have
\[
|y_{n}\sqrt{x} - 1| \le |y_{n-1}\sqrt{x} - 1|^2 |y_{n-1}\sqrt{x}+2|/2.
\]

One can check by induction that $y_{n}\sqrt{x} \le 3/2$, which implies that 
\[
|y_{n}\sqrt{x} - 1| \le \frac{7}{4} |y_{n-1}\sqrt{x} - 1|^2 \le \frac{4}{7} \left(\frac{7}{4} |y_0\sqrt{x}-1|\right)^{2^n}
\le \frac{4}{7} \left(7/8)\right)^{2^n}.
\]

Overall, we use a degree $O(A\log A)$ polynomial approximation for
computing $y_0$, so $\log A (1+o(1))$ levels and $O(\sqrt{A\log A})$
ciphertext-ciphertext multiplications. We then use $O(\log p)$ levels and
$O(\log p)$ ciphertext-ciphertext multiplications to then compute
$1/\sqrt{x}$ to the precision $2^{-p}$. The total number of levels is
thus $(\log A)(1 + o(1)) + O(\log p)$, as claimed, and the total
number of ciphertext-ciphertext multiplications $O(\sqrt{A\log A} + \log p)$.
\end{proof}

The proof given there should not be used as an efficient algorithm. In practice, one should replace the Taylor expansion of degree $O(A \log A)$ by a minimax approximation of degree $\approx \sqrt{A} / 10$ which suffices to bring the Newton iteration within its
zone of quadratic convergence. This implies that in practice, the depth of the computation
is expected to be $\frac{\log A}{2} \cdot (1 + o(1)) + O(\log p)$, and the number of ciphertext-ciphertext multiplications is expected to be $O(A^{1/4})$. 

\section{Proof of Theorem~\ref{th:prec_softmax}}\label{se:pfthm}
We recall the statement of the theorem. 
\begin{theorem}\label{thm:appdx}
Let $M$ be a fixed positive real number. 
We define $k =  \lceil \log M - \log \ln n\rceil$ and
$\varepsilon = 2^{-p}$, and assume that
\[ \max(n^2, 2.9 (2.08 \sqrt{n})^k) \varepsilon \le 1/1000.\] 
Algorithm~\ref{alg:Softmax}, on input $\mathbf{x}\in[-M, 0]^n$, returns a vector $y\in [o(1),1+o(1)]^n$ such that
  \[
     \max_{1\le i\le n} |y_i - \SM(x)_i| \le 2.9 \cdot \left((2.08 \sqrt{n})^k (n+1) + 15.5n^2\right) \varepsilon.
  \]
\end{theorem}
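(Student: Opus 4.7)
The natural approach is to compare the computed values $y^{(j)}_i$ with the exact mathematical values $\tilde{y}^{(j)}_i$ that would be produced by error-free arithmetic. By the doubling identity, $\tilde{y}^{(0)}_i = \exp(x_i/2^k)$ and $\tilde{y}^{(j)}_i = \SM(\mathbf{x}/2^{k-j})_i$ for $j\ge 1$; in particular $\tilde{y}^{(k)} = \SM(\mathbf{x})$, so the goal is to bound $E_k := \max_i |y^{(k)}_i - \tilde{y}^{(k)}_i|$. The assumption on the exponential oracle gives $E_0 \le \varepsilon$, and the choice $k = \lceil \log M - \log\ln n\rceil$ guarantees $\tilde{s}_0 := \sum_i \tilde{y}^{(0)^2}_i \ge 1/n$, so that the first inverse square root is in the correct regime and $\tilde\lambda_0$ is controlled.

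The core of the proof is to derive, for $j\ge 1$, a recurrence of the form $E_j \le 2.08\sqrt{n}\, E_{j-1} + c\,\varepsilon + (\text{quadratic corrections})$ by propagating the error through the four sub-operations of an iteration: (i) the $n$ squarings $y_i^2$, (ii) the summation plus inverse square root yielding $\lambda_j$, (iii) the normalization $z_i = \lambda_j y_i$, (iv) the final squaring $z_i^2$. For $j\ge 2$ one uses the invariants $\tilde{y}^{(j-1)}_i \in [0,1]$, $\sum_i \tilde{y}^{(j-1)}_i = 1$, and (via Lemma~\ref{le:cs}) $\tilde{s}_j \in [1/n, 1]$, so that $\tilde\lambda_j \le \sqrt{n}$; step (iii) then contributes the factor $\sqrt n$ and step (iv) multiplies by an additional factor $2$ via the derivative of squaring at $\tilde z_i\in[0,1]$. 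The quadratic cross terms ($E_{j-1}^2$, $\delta\lambda_j \cdot E_{j-1}$, etc.) are controlled through the hypothesis $\max(n^2, 2.9(2.08\sqrt n)^k)\varepsilon \le 1/1000$, which is precisely what inflates the ideal constant $2$ into the effective $2.08$.

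The first iteration $j=1$ must be treated separately, because $\sum_i \tilde{y}^{(0)}_i$ can be of order $n$ rather than $1$; consequently the error in $s_1$ carries an extra factor of $n$, giving $E_1 = O(n)\,\varepsilon$ instead of $O(\varepsilon)$. This is precisely the source of the $(n+1)$ prefactor in the theorem. Unrolling the recurrence then yields
\begin{equation*}
E_k \;\le\; (2.08\sqrt n)^{k-1} E_1 \;+\; c\,\varepsilon\!\sum_{l=0}^{k-2}(2.08\sqrt n)^l \;\le\; 2.9\cdot\bigl((2.08\sqrt n)^k(n+1) + 15.5n^2\bigr)\varepsilon,
\end{equation*}
after summing the geometric series and absorbing the second-order contributions (which produce the $15.5n^2\varepsilon$ term). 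The containment $y^{(k)}\in[o(1),1+o(1)]^n$ is then automatic from the bound together with $\tilde y^{(k)}_i \in [0,1]$.

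\textbf{Main obstacle.} The delicate point is avoiding the naive bound $\delta\lambda_j \lesssim n^{3/2} E_{j-1}$, which would yield an amplification factor of order $n^{3/2}$ per iteration and destroy the theorem. The key is to exploit the self-correcting property highlighted in Remark~\ref{rem:approx_renorm_practice}: any \emph{slot-uniform} multiplicative perturbation of $\lambda_j$ propagates into $y^{(j)}$ as a common rescaling, which is cancelled at the very next inverse-square-root step. One must therefore split the error on $\lambda_j$ into a common multiplicative component (harmless across iterations) and a residual slot-dependent component of size $O(\varepsilon)$, and track only the latter in the absolute-error recurrence. Carrying out this decomposition cleanly, while keeping all the second-order terms inside the $1/1000$ safety margin provided by the hypothesis, is where the real work lies.
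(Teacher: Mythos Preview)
Your identification of the main obstacle and of the cure (factor out a slot-uniform multiplicative error so that only a residual of size $O(\varepsilon)$ survives each iteration) is exactly right, and is the heart of the argument. However, the way you propose to execute it differs from the paper's and contains a misattribution that would make the constants come out wrong.

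The paper does \emph{not} track $E_j=\max_i|y^{(j)}_i-\SM(\mathbf{x}/2^{k-j})_i|$ and then try to peel off a multiplicative part of the $\lambda_j$-error a posteriori. Instead it builds the multiplicative split into the very quantity being tracked: it proves by induction that there is a scalar $C_j$ with
\[
\bigl|\,\widetilde{y^{(j)}_i}-C_j\exp(x_i/2^{k-j})\,\bigr|\le A_j,\qquad A_0=\varepsilon,\ A_{j+1}=2.08\sqrt{n}\,A_j+3.036\,\varepsilon,
\]
and $C_{j+1}=(\widetilde{\lambda_{j+1}}\,C_j)^2$. The floating constant $C_j$ \emph{is} the ``common multiplicative component'' you mention; with this choice the error in $\widetilde{\lambda_{j+1}}$ never enters multiplicatively, only the fresh rounding errors do, and the $2.08\sqrt{n}$ factor is simply $\widetilde{\lambda_{j+1}}\cdot 2\widetilde{z_i}$ coming from normalization followed by squaring. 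This is cleaner than maintaining $E_j$ and a separate running multiplicative defect.

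A concrete consequence is that your attribution of the $(n+1)$ prefactor is wrong. In the paper the first iteration is \emph{not} special: the same lemma, with the same constants $c=1.0158$, $c'=1.0006$, applies at $j=1$ (one only checks that $\sum_i\widetilde{y^{(0)}_i}\in[1-n\varepsilon,\,n(1+\varepsilon)]$ fits the hypothesis), and one gets $A_1=O(\sqrt{n})\varepsilon$, not $O(n)\varepsilon$. The factor $(n+1)$ arises only at the \emph{last} step, when one converts the bound $|\widetilde{y^{(k)}_i}-C_k\exp(x_i)|\le A_k$ into a bound against $\SM(\mathbf{x})_i$: summing over $i$ pins down $C_k$ up to $nA_k/\sum_\ell\exp(x_\ell)$, and recombining gives $(n+1)A_k$. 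Likewise the $15.5n^2\varepsilon$ term is not a residue of the geometric series; it is the separate normalization defect $|\sum_i\widetilde{y^{(k)}_i}-1|\le 15.5n^2\varepsilon$ supplied by the auxiliary lemma, which enters additively in that same final conversion. If you push your plan through as written (with $E_1=O(n)\varepsilon$ feeding the recursion), the unrolling gives a leading term of order $(2.08\sqrt{n})^{k-1}\cdot n\,\varepsilon\approx (2.08\sqrt{n})^{k}\cdot \tfrac{\sqrt{n}}{2}\,\varepsilon$, i.e.\ a stray $\sqrt{n}$ rather than the claimed $(n+1)$.
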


Throughout the analysis, we shall keep the notations of
Algorithm~\ref{alg:Softmax}. When a mathematical quantity bears a
tilde symbol, it stands for the numerical approximation computed in
the execution of the algorithm; when it does not, it is the exact
mathematical value. For example, $\widetilde{\lambda_j}\cdot
\widetilde{z_i}$ differs from $\widetilde{\lambda_j\cdot z_i}$ as an
error occurs in performing the multiplication.

Before proving the theorem, we prove a lemma which gives estimates on
the main quantities used throughout the algorithm; we prove that
$\widetilde{\lambda_j}$ is $O(\sqrt{n})$ except maybe at Step 1, and
that after the first step, the vector $(\widetilde{y_i^{(j)}})_{1\le i\le n}$ is
always approximately normalized (i.e., has a sum close to 1).

As this lemma will be used inductively, we refrain from using $o$ and
$O$ notations, which might imply a hidden exponential growth, and
explicit all error terms. Apart from the term $c c'\sqrt{n}$ which
plays a role in the final result, we overall chose to favour
simplicity over optimality in our estimates.

\begin{lemma}\label{lem:lamy}
Under the assumptions of Theorem~\ref{thm:appdx}, for all $j$, if $c n \ge \sum_{i=1}^n \widetilde{y_i^{(j-1)}} \ge c^{-1}$ with $1\le c \le 1/(n\varepsilon)$, we have, for all $j$, the following inequalities:
\begin{align*}
&\widetilde{\lambda_j} \le c c' \sqrt{n} +\varepsilon,\\
&\left|\sum_{i=1}^{n} \widetilde{y_i^{(j)}} - 1\right| \le 15 c^2c'^2 n^2 \varepsilon, \\
& \widetilde{z_i} \le 1 + 16 c^2c'^{2}n^2\varepsilon
\ \ \ \ \textrm{for all $i$},
\end{align*}
for any $c' \ge (1 - c^2n^2\varepsilon)^{-1/2}$. 
\end{lemma}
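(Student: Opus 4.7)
The plan is to derive the three inequalities in the order of the algorithm: first $\widetilde{\lambda_j}$, then $\widetilde{z_i}$, and finally the sum of the $\widetilde{y_i^{(j)}}$. Throughout, I will use the computational model: additions are exact, every squaring or Hadamard product incurs an absolute error bounded by $\varepsilon$ per slot, and the inverse square root routine returns a value within $\varepsilon$ of the true mathematical value. I let $\widetilde{s_j}$ denote the quantity the algorithm actually holds when it evaluates $\sum_i y_i^{(j-1)^2}$, so that the per-slot squaring errors give $|\widetilde{s_j} - \sum_i \widetilde{y_i^{(j-1)}}^2| \le n\varepsilon$. Non-negativity of $\widetilde{y_i^{(j-1)}}$ up to $O(\varepsilon)$ is verified inductively and used freely.

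The first inequality is then straightforward. Cauchy--Schwarz applied to $\sum_i \widetilde{y_i^{(j-1)}} \ge c^{-1}$ yields $\sum_i \widetilde{y_i^{(j-1)}}^2 \ge c^{-2}/n$, hence $\widetilde{s_j} \ge (c^{-2}/n)(1 - c^2 n^2\varepsilon)$. The $\varepsilon$-approximation of the inverse square root gives $\widetilde{\lambda_j} \le \widetilde{s_j}^{-1/2} + \varepsilon \le c\sqrt{n}\,(1 - c^2 n^2\varepsilon)^{-1/2} + \varepsilon$, which simplifies to $cc'\sqrt{n} + \varepsilon$ by the defining property of $c'$.

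For the second inequality, write $\widetilde{z_i} = \widetilde{\lambda_j} \widetilde{y_i^{(j-1)}} + \eta_i$ with $|\eta_i| \le \varepsilon$, and split $\widetilde{\lambda_j} = \widetilde{s_j}^{-1/2} + \varepsilon_1$ with $|\varepsilon_1|\le \varepsilon$. From $\widetilde{y_i^{(j-1)}}^2 \le \sum_k \widetilde{y_k^{(j-1)}}^2 \le \widetilde{s_j} + n\varepsilon$, I obtain $\widetilde{s_j}^{-1/2} \widetilde{y_i^{(j-1)}} \le (1 + n\varepsilon/\widetilde{s_j})^{1/2} \le (1 + c^2c'^2 n^2\varepsilon)^{1/2} \le 1 + \tfrac{1}{2} c^2 c'^2 n^2\varepsilon$. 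Adding the crude bounds $|\varepsilon_1|\widetilde{y_i^{(j-1)}} \le cn\varepsilon$ (using $\widetilde{y_i^{(j-1)}}\le \sum_k \widetilde{y_k^{(j-1)}} \le cn$) and $|\eta_i|\le \varepsilon$ leaves ample slack under the claimed $1 + 16 c^2 c'^2 n^2 \varepsilon$.

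For the third inequality, write $\widetilde{y_i^{(j)}} = \widetilde{z_i}^2 + \delta'_i$ with $|\delta'_i|\le \varepsilon$ and expand $\widetilde{z_i}^2$ to get
\[ \sum_i \widetilde{y_i^{(j)}} = \widetilde{\lambda_j}^2 \sum_i \widetilde{y_i^{(j-1)}}^2 + 2\widetilde{\lambda_j} \sum_i \widetilde{y_i^{(j-1)}} \eta_i + \sum_i \eta_i^2 + \sum_i \delta'_i. \]
The leading term equals $\widetilde{\lambda_j}^2 \widetilde{s_j} - \widetilde{\lambda_j}^2 \sum_i \delta_i$, and expanding $\widetilde{\lambda_j}^2 \widetilde{s_j} = 1 + 2\varepsilon_1 \sqrt{\widetilde{s_j}} + \varepsilon_1^2 \widetilde{s_j}$ together with the crude bound $\widetilde{s_j} \le c^2 n^2$ (since $\sum y_i^2 \le (\max y_i)(\sum y_i) \le c^2 n^2$) gives $|\widetilde{\lambda_j}^2 \widetilde{s_j} - 1| \le 3 c^2 n^2 \varepsilon$. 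The remaining contributions are: $|\widetilde{\lambda_j}^2 \sum_i \delta_i| \le 2 c^2 c'^2 n^2 \varepsilon$; the cross term is at most $2(cc'\sqrt{n}+\varepsilon)\cdot cn \cdot \varepsilon \le 3 c^2 c'^2 n^2 \varepsilon$; and $\sum_i \eta_i^2$, $\sum_i \delta'_i$ are $O(n\varepsilon)$. Consolidating fits under $15 c^2 c'^2 n^2 \varepsilon$ with room to spare. The main obstacle is precisely this bookkeeping --- five distinct error sources (squaring in $\widetilde{s_j}$, InvSqrt, the product $\widetilde{\lambda_j}\widetilde{y_i^{(j-1)}}$, the final squaring, and the $(1-c^2n^2\varepsilon)^{-1/2}$ factor absorbed into $c'$) must be combined while keeping constants manageable; the generous constants $15$ and $16$ in the statement are chosen precisely to absorb the messy sums without chasing optimal bounds.
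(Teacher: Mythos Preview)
Your proof is correct and follows essentially the same route as the paper: the same Cauchy--Schwarz lower bound on $\sum \widetilde{y_i^{(j-1)}}^2$, the same decomposition of the five error sources, and the same expansion of $\widetilde{\lambda_j}^2 \sum \widetilde{y_i^{(j-1)}}^2$ around~$1$. The one noteworthy difference is the order: you bound $\widetilde{z_i}$ \emph{directly} via $\widetilde{s_j}^{-1/2}\widetilde{y_i^{(j-1)}} \le (1+n\varepsilon/\widetilde{s_j})^{1/2}$, whereas the paper first establishes the sum bound and then recovers $\widetilde{z_i}$ from it through the slick observation $\widetilde{z_i} \le \max(1,\widetilde{z_i}^2) \le \widetilde{y_i^{(j)}} + \varepsilon \le \sum_k \widetilde{y_k^{(j)}} + \varepsilon$; your direct argument is arguably cleaner and decouples the two estimates, while the paper's trick is shorter once the sum bound is in hand.
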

\begin{proof}
We have $|\widetilde{y_i^{(j-1)^2}} - \widetilde{y_i^{(j-1)}}^2| \le \varepsilon$, so that
  \begin{equation}\label{eq:trivial}
  \left |\sum_{i=1}^n \widetilde{y_i^{(j-1)^2}} - \sum_{i=1}^n \widetilde{y_i^{(j-1)}}^2\right| \le n\varepsilon.\end{equation}

  Our assumption on the inverse square root function implies that
  \begin{equation}\label{eq:approxsqrt}
  \left|\widetilde{\lambda_j} - \left(\sum_{i=1}^n \widetilde{y_i^{(j-1)^2}}\right)^{-1/2}\right| \le \varepsilon,
  \end{equation}

  From $cn \ge \sum_{i=1}^n \widetilde{y_i^{(j-1)}} \ge c^{-1}$ we deduce
  thanks to the Cauchy-Schwarz inequality that
  $c^2n^2 \ge \sum_{i=1}^n \widetilde{y_i^{(j-1)}}^2 \ge c^{-2}/n$. In
  particular, we have $\sum_{i=1}^n \widetilde{{y_i^{(j-1)}}^2} \ge c^{-2}/n -
  n\varepsilon$, so that
\begin{equation}\label{eq:down}
  \left(\sum_{i=1}^n \widetilde{{y_i^{(j-1)}}^2}\right)^{-1/2}
  \le c\sqrt{n} (1 - c^2 n^2\varepsilon)^{-1/2}
  \le cc'\sqrt{n}. 
  \end{equation}
  
We also have
\begin{equation}\label{eq:up}
  \left(\sum_{i=1}^n \widetilde{{y_i^{(j-1)}}^2}\right)^{1/2}
  \le (c^2n^2 + n\varepsilon)^{1/2} \le 2cn. 
  \end{equation}

These yield bounds on $\widetilde{\lambda_j}$ and
$\widetilde{\lambda_j} \left(\sum_{i=1}^n \widetilde{y_i^{(j-1)^2}}\right)^{1/2} - 1$ as follows. First, using (\ref{eq:approxsqrt}) and (\ref{eq:down}) we obtain 
\[
  \widetilde{\lambda_j} \le cc' \sqrt{n} + \varepsilon,
\]
  as claimed. In the end of the proof, we shall mostly use the following weaker bound:
\begin{equation}\label{eq:lambdaup}
\widetilde{\lambda_j} \le 2cc'\sqrt{n}.
\end{equation}

Then, from (\ref{eq:approxsqrt}) and (\ref{eq:up}), we deduce 
\begin{equation}
\label{eq:normlambda}
\left|\widetilde{\lambda_j} \left(\sum_{i=1}^n \widetilde{y_i^{(j-1)^2}}\right)^{1/2} - 1\right| \le \varepsilon \left(\sum_{i=1}^n \widetilde{y_i^{(j-1)^2}}\right)^{1/2} \le 2cn\varepsilon, 
\end{equation}
from which it follows that 
  \begin{align} \label{eq:normsquared}
  \left|\widetilde{\lambda_j}^2 \left(\sum_{i=1}^n \widetilde{y_i^{(j-1)^2}}\right) - 1\right| & \le 2cn \varepsilon 
  \left( 1 + \widetilde{\lambda_j} \left(\sum_{i=1}^n \widetilde{y_i^{(j-1)^2}}\right)^{1/2}\right), \nonumber \\
  & \le 2cn\varepsilon (2 + 2 c n \varepsilon), \nonumber\\
  & \le 8cn\varepsilon. 
  \end{align}

Similarly, from 
  $
  \left|
  \widetilde{\lambda_j}\cdot \widetilde{y_i^{(j-1)}} - \widetilde{z_i}
  \right| \le \varepsilon,$
we deduce
  \begin{align*}
  \left|
  \widetilde{\lambda_j}^2\cdot \widetilde{y_i^{(j-1)}}^2 - \widetilde{y_i^{(j)}}
  \right| & \le
  \left|
\left(
\widetilde{\lambda_j}\cdot \widetilde{y_i^{(j-1)}} -
\widetilde{z_i} \right)
\left(
  \widetilde{\lambda_j}\cdot \widetilde{y_i^{(j-1)}} + \widetilde{z_i}
    \right)\right| \\
    & \hspace*{2cm} + \left| \widetilde{y_i^{(j)}} - \widetilde{z_i}^2\right|,  \\  
 & \le 
\varepsilon (2\widetilde{\lambda_j} \widetilde{y_i^{(j-1)}} + \varepsilon) + \varepsilon \\ 
& \le 2 \varepsilon\widetilde{\lambda_j} \widetilde{y_i^{(j-1)}} + 2\varepsilon. 
\end{align*}
  
  Summing, we obtain
\[
  \left|
  \sum_{i=1}^n \widetilde{\lambda_j}^2\cdot \widetilde{y_i^{(j-1)}}^2 - \sum_{i=1}^n \widetilde{y_i^{(j)}}
  \right| \le
  2\varepsilon \widetilde{\lambda_j} \sum_{i=1}^n \widetilde{y_i^{(j-1)}} + 2n\varepsilon. 
  \]

From (\ref{eq:lambdaup}) and the assumptions of the lemma, we get
\[\widetilde{\lambda_j} \sum_{i=1}^n \widetilde{y_i^{(j-1)}} \le
2 c^2 c' n^{3/2},\]
from which we deduce
\begin{align}\label{eq:stepntonp1}
  \left|
  \sum_{i=1}^n \widetilde{\lambda_j}^2\cdot \widetilde{y_i^{(j-1)}}^2 - \sum_{i=1}^n \widetilde{y_i^{(j)}}
  \right| & \le 4c^2c' n^{3/2} \varepsilon + 2n\varepsilon, \nonumber\\
  & \le 6 c^2 c' n^{3/2} \varepsilon. 
  \end{align}
  
We finally obtain, using (\ref{eq:trivial}), (\ref{eq:lambdaup}) and (\ref{eq:normsquared}),  
  \begin{align*} 
  \left|
  \sum_{i=1}^n \widetilde{\lambda_j}^2\cdot \widetilde{y_i^{(j-1)}}^2 - 1
  \right|
  & \le 
  \left| \widetilde{\lambda_j}^2 \sum_{i=1}^n \widetilde{y_i^{(j-1)^2}} - 1\right| + n \widetilde{\lambda_j}^2\varepsilon,\\
  & \le 8cn\varepsilon + c^2{c'}^2n^2 \varepsilon, \\
  & \le 9 c^2{c'}^2 n^2 \varepsilon, 
  \end{align*}
which we combine to (\ref{eq:stepntonp1}) to deduce $|\sum_{i=1}^n \widetilde{y_i^{(j)}} - 1| \le 15 c^2{c'}^2 n^2 \varepsilon$, as claimed. 
As $|\widetilde{y_i^{(j)}} - \widetilde{z_i}^2| \le \varepsilon$ we
also have that for all $i$, $\widetilde{z_i} \le \max(1, \widetilde{z_i}^2)
\le 1 + \varepsilon + 15c^2c'^{2} n^2 \varepsilon \le 1 + 16c^2c'^{2}n^2\varepsilon$, from which the last point follows.
\end{proof}

We now turn to the proof of the theorem.

\begin{proof}
As $k = \lceil \log M - \log \ln n\rceil$, we have $2^k \ge M/\ln n$, so
that $x_i \in [-M, 0]$ implies that $x_i / 2^k \in [-\ln n, 0]$ for all $1\le i\le n$. Hence,
under our assumptions on the exponential function,
$\widetilde{y_i^{(0)}} \in [1/n - \varepsilon, 1 + \varepsilon]$.

We choose $n$ large enough so that 
$n^2\varepsilon < 1/1000$. Then, the assumptions of the lemma are
fulfilled for $j = 1$ as soon as $c \ge (1 - n \varepsilon)^{-1}$; for this
it suffices that $c\ge (1 - n^2\varepsilon)^{-1}$. We choose  $c = 1.0158$,
check that we can choose $c' = 1.0006$, and deduce that
\[
1-15.5 n^2 \varepsilon \le
\sum_{i=1}^{n} \widetilde{y_i^{(1)}} \le
1+15.5 n^2 \varepsilon.
\]

For the following loop iterations, 
one checks that we can still use $c = 1.0158$ and $c' = 1.0006$, which
yield again
\begin{equation} \label{eq:bndsum}
1-15.5 n^2 \varepsilon  \le \sum_{i=1}^{n} \widetilde{y_i^{(2)}} \le
1+15.5 n^2 \varepsilon.
\end{equation}
This shows that the induction carries over and that we can assume that
the conclusions of Lemma~\ref{lem:lamy} apply throughout the algorithm\footnote{Note that by making stronger requirements
on $n^2\varepsilon$, the constants $c$ and $c'$ can be brought arbitrarily
close to 1.}
with $c = 1.0158$ and $c' = 1.0006$
With these choices of parameters,  the 
lemma gives, for all $i, j$,
\begin{align}
\widetilde{\lambda_j} \le 1.0158 \cdot 1.0006 \sqrt{n} + \varepsilon \le 1.018 \sqrt{n},\label{eq:bndlam}\\
\widetilde{z_i} \le 1 + 16.53 n^2\varepsilon \le 1.017. \label{eq:bndzi}
\end{align}

We now estimate the error growth of the normalize-and-square strategy. We prove
by induction that for all $j \le k$, there exists $C_j$ such that 
\[
 |\widetilde{y_i^{(j)}} - C_j \exp(x_i/2^{k-j})| \le A_j, 
\]
where $A_j$ is defined, for $j < k$, by $A_0 = \varepsilon$, $A_{j+1} = 2.08 \sqrt{n} A_j + 3.036 \varepsilon$, so that
\begin{align}
A_j & = (2.08\sqrt{n})^j \varepsilon \left(1 + \frac{3.036}{2.08\sqrt{n} - 1} \right) - \frac{3.036}{2.08\sqrt{n} - 1} \varepsilon, \nonumber\\
& \le 2.9\cdot (2.08\sqrt{n})^j \varepsilon. \label{bnd:aj}
\end{align}

This holds for $j = 0$ with $C_0 = 1$ given
our assumptions on the exponential function.

We now fix $j < k$, and assume the induction hypothesis for $j$.
Then, we have, for $1\le i\le n$,
\begin{align}
| \widetilde{z_i} - \widetilde{\lambda_{j+1}} C_j \exp(x_i/2^{k-j})|
&\le |\widetilde{z_i} - \widetilde{\lambda_{j+1}}\widetilde{y_i^{(j)}}| + \nonumber\\
 & 
|\widetilde{\lambda_{j+1}}\widetilde{y_i^{(j)}}-  \widetilde{\lambda_{j+1}} C_j \exp(x_i/2^{k-j})|, \nonumber\\ 
&\le \varepsilon + \widetilde{\lambda_{j+1}} A_j, \label{eq:bndnorminterm}\\
&\le \varepsilon + 1.018 \sqrt{n} A_j, \label{eq:bndnorm}
\end{align}
thanks to \eqref{eq:bndlam}. 

Further, from \eqref{eq:bndlam} and \eqref{eq:bndzi} we derive 
\[
|2 \widetilde{z_i} + \widetilde{\lambda_{j+1}} A_j + \varepsilon| \le
   2.034 + 1.018 \sqrt{n} A_j + 0.001.
   \]
For $j < k$, Inequality~(\ref{bnd:aj}) implies $1.018\sqrt{n} A_j \le 2.9(2.08\sqrt{n})^k \varepsilon < 1/1000$ thanks to our assumption on $\varepsilon$. Overall, we deduce 
\begin{equation}\label{eq:bndsquare}
|2 \widetilde{z_i} + \widetilde{\lambda_{j+1}} A_j + \varepsilon|
 \le 2.036.
 \end{equation}

We have 
\[
| \widetilde{y_i^{(j+1)}} - \widetilde{z_i}^2 |  =
| \widetilde{z_i^2} - \widetilde{z_i}^2 | 
\le  \varepsilon,
\]
from which we deduce the following chain of inequalities: 
\begin{align}
| \widetilde{y_i^{(j+1)}} - (\widetilde{\lambda_{j+1}} C_j)^2 & \exp(x/2^{k-j-1}) | \le \nonumber \\
& \varepsilon + |\widetilde{z_i}^2 - (\widetilde{\lambda_{j+1}} C_j)^2 \exp(x/2^{k-j-1})|,\nonumber \\
& \hspace*{-.1cm}\le
\varepsilon + |\widetilde{z_i} - (\widetilde{\lambda_{j+1}} C_j) \exp(x/2^{k-j})|\cdot \nonumber \\
& \hspace*{2cm} |\widetilde{z_i} + (\widetilde{\lambda_{j+1}} C_j) \exp(x/2^{k-j})|,  \nonumber \\
& \hspace*{-.15cm}\stackrel{\eqref{eq:bndnorminterm}}{\le} \varepsilon + |\widetilde{z_i} - \widetilde{\lambda_{j+1}} C_j \exp(x_i/2^{k-j})| \cdot \nonumber\\
& \hspace*{2cm}|2 \widetilde{z_i} + \widetilde{\lambda_{j+1}} A_j + \varepsilon|, \nonumber \\
& \hspace*{-0.3cm}\stackrel{\eqref{eq:bndnorm}, \eqref{eq:bndsquare}}{\le} \varepsilon + 2.036 (\varepsilon + 1.018 \sqrt{n} A_j) \nonumber\\
& \le 2.08 \sqrt{n} A_j +  3.036 \varepsilon = A_{j+1},
\label{eq:proof_intermed}
\end{align}
which concludes the induction by taking $C_{j+1} = (\widetilde{\lambda_{j+1}} C_j)^2$.
\smallskip

We now bound the final normalization error. For $j = k$, we find, for $1\le i \le n$, 
\[
|\widetilde{y_i^{(k)}} - C_k \exp(x_i)| \le A_k.
\]
Summing, we have 
\[ 
\left|C_k - \frac{\sum_{\ell=1}^n  \widetilde{y_\ell^{(k)}}}{\sum_{\ell=1}^n  \exp(x_\ell)}\right| \le \frac{nA_k}{\sum_{\ell=1}^n \exp(x_\ell)}, 
\]
so that, for all $1\le i \le n$, \def\bx{\mathbf{x}}
\begin{align*}
\left|\widetilde{y_i^{(k)}} - \sum_{\ell=1}^n  \widetilde{y_\ell^{(k)}} \SM(\bx)_i\right| & \le A_k + nA_k \frac{\exp(x_i)}{\sum_{\ell=1}^n \exp(x_\ell)} \\
& \le (n+1)A_k. 
\end{align*}
Finally, using \eqref{eq:bndsum}, we have 
\[
\max_{1\le i\le n}|\widetilde{y_i}^{(k)} - \SM(\bx)_i| \le (n+1)A_k + 15.5 n^2 \varepsilon,
\]
completing the proof. 
\end{proof}

\begin{remark}
The final estimate is highly pessimistic.
Looking at the proof in a more heuristic way, we can notice that the
error induced by the ``normalize and square'' strategy is expected to
be dominated by: 
\def\by{\mathbf{y}^{(0)}}
\begin{align}
  {\mathcal B}_{k,\by} :=  2^k \prod_{j=1}^k \lambda_j^2 y_i^{(j-1)}, \label{eq:bigprod}
\end{align}
where $i$ is the index for which $x_i$ is maximal (without loss of generality,
we assume $i = 1$ in the sequel). 
By induction, ignoring the various evaluation errors, one checks that, for $j\ge 2$, $1\le i\le n$, the following mathematical identities hold: 
\[
\lambda_{j}^2 = \frac{\left(\sum_{\ell=1}^n {y_\ell^{(0)}}^{2^{j-1}}\right)^2}{\sum_{\ell=1}^n {y_\ell^{(0)}}^{2^j}}, y_i^{(j-1)} = \frac{{y_i^{(0)}}^{2^{j-1}}}{\sum_{\ell=1}^n {y_\ell^{(0)}}^{2^{j-1}}}
\]
and that $\lambda_1^2 y_1^{(0)} = y_{1}^{(0)} / \sum_{\ell=1}^n {y_\ell^{(0)}}^2$, 
so that
\[
{\mathcal B}_{k, \by} = 2^k  
    \frac{y_1^{(0)^{2^k-1}}}{\sum_{\ell=1}^n y_\ell^{(0)^{2^k}}} \le \frac{2^k}{y_1^{(0)}} \le n2^k, 
\]
where we have lower bounded the denominator by $y_1^{(0)^{2^k}}$ and
the last inequality follows from our choice for $k$. 
This heuristic analysis suggests that in the worst case scenario the normalize and square strategy should lose at most of the order of $k+ \log(n)$ bits of precision. 

We must also consider the normalization error, related to the error on
$\sum_{i=1}^n y_i^{(k-1)}$, which on average is expected to be
$O(\sqrt{n})$ times the error on $y_i^{(k-1)}$. Overall, we thus expect a total loss of
precision of the order of $k + 3(\log n)/2$ bits, which is already a
pessimistic estimate compared with our experiments, see~Table~\ref{tab:alg1}. 
\end{remark}

\section{Different normalization strategies}\label{se:diffnorm}
For the sake of completeness, we discuss here different normalization strategies. The renormalize-and-square can easily be substituted by a renormalize-and-$t$-th power strategy. We then use
\[
\lambda_j \leftarrow \left(\sum_{i=1}^n y_j^{(i)^t}\right)^{1/t}
\]
as a renormalization factor. 

In practice, the right choice of $k$ goes from $M/2^k \approx \log n$
to $M/t^k \approx \log n$, so we essentially gain a factor $\log t$ on
the total depth of the main thread. However, the renormalization step
becomes more difficult with larger $t$. After the first step, $t$-th
power renormalization gives $\sum_{i=1}^n y_j^{(i)} = 1$, so by
Hölder's inequality we have $\sum_{i=1}^n y_j^{(i)^t} \in [n^{-(t-1)}, 1]$.

This range becomes larger with $t$ so that for fixed $n$, we expect
that the number of levels is multiplied by $t$; overall, we lose a
total factor $t / \log t$ in the level consumption. This shows that
this strategy may only make sense for a limited range of $n$, where
the asymptotic regime does not reflect the experiments. 

\section{Sketch of an analysis of HETAL}\label{app:hetal}
We first outline an analysis of the level of consumption of HETAL.

The first stage of the HETAL algorithm uses a hierarchical maximum
algorithm; starting with $n$ slots, at the first iteration, it
computes an approximation of the maximum of 2 slots, then reduces to
compute the maximum of $n/2$ slots. As such, 
the level cost amounts to $\log n$ hierarchical steps of
comparison. In order to achieve approximation within a constant of the
final maximum, starting with numbers in $[-M, 0]$, it can be proved
that due to the nice properties of the maximum
function \cite{Cheon0K20} used by HETAL, it suffices that each
pairwise comparison returns a maximum within error $O(1)$ to get an
overall maximum within $O(\log n)$. 
The former is equivalent to evaluating a sign function within error
$O(M^{-1})$ on $[-1, 1]$; applying the heuristic
result~{\cite[Corollary 5]{Cheon0K20}} with ``$\alpha = \log M$'', we see
that each pairwise maximum costs essentially $\log M / \log k$
evaluations of a degree $(2k+1)$ polynomial, resulting to a level cost of
$\log M + O(1)$. The total level cost of the maximum computation is thus
$(\log M) (\log n) (1 + o(1))$. The exponential computation and the final
normalization are negligible compared to estimating the maximum.

In the many-ciphertext contexts, starting with $m$ ciphertexts, we can
group the pairwise maxima in $m/2$ ciphertexts, then $m/4$, etc. The
$j$-th level of the hierarchical maximum tree will thus be executed on
$\max(1, m/2^j)$ ciphertexts, for $j\le \log n$. We overall make a
total $\le \log n + 2m - 1$ calls to the maximum, each for a cost of
$\log M (1 + o(1))$, hence an amortized cost of $2 \log M + o_m(1)$
levels per ciphertext. In that case, the exponential computation is no
longer negligible; it requires using domain extension
polynomials \cite{CheonetAl2022}. For the parameters mentioned
in \cite[Algorithm 1]{HETAL}, we find that the level consumption is
$2 \log (M/R) / \log L$, for $1.5 < L < 1.5\sqrt{3}$, so $\ge
2.09 \log M (1 + o(1))$ for fixed $R$.  Overall, we find a global cost
$\ge 4 \log M (1 + o(1))$ per ciphertext.

\end{document}